\definecolor{ColBlack}{RGB}{0,0,0} % Black.
\definecolor{ColWhite}{RGB}{255,255,255} % White.
\definecolor{ColAA}{HTML}{92C271}
\definecolor{ColAB}{HTML}{509421}
\definecolor{ColAC}{HTML}{406235}
\definecolor{ColBA}{HTML}{EDAE77}
\definecolor{ColBB}{HTML}{C07A3E}
\definecolor{ColBC}{HTML}{8C7460}
\newtheorem{Theorem}{Theorem}[subsection]
\newtheorem{Proposition}[Theorem]{Proposition}
\renewcommand{\leq}{\leqslant}
\renewcommand{\geq}{\geqslant}
\newcommand{\ColAA}[1]{\textcolor{ColAA}{#1}}
\newcommand{\Hide}[1]{\ColAA{\tt HIDEN}}
\newcommand{\Def}[1]{\textcolor{ColBlack!60}{\em #1}}
\newcommand{\Par}[1]{\left(#1\right)}
\newcommand{\Bra}[1]{\left\{#1\right\}}
\newcommand{\Han}[1]{\left[#1\right]}
\tikzstyle{Centering}=[{baseline={([yshift=-0.5ex]current bounding box.center)}}]
\tikzstyle{MarkAA}=[draw=ColAA!80,fill=ColAA!8]
\tikzstyle{MarkAB}=[draw=ColAB!80,fill=ColAB!8]
\tikzstyle{MarkAC}=[draw=ColAC!80,fill=ColAC!8]
\tikzstyle{MarkBA}=[draw=ColBA!80,fill=ColBA!8]
\tikzstyle{MarkBB}=[draw=ColBB!80,fill=ColBB!8]
\tikzstyle{MarkBC}=[draw=ColBC!80,fill=ColBC!8]
\tikzstyle{Node}=[circle,MarkAA,inner sep=1pt,minimum size=2mm,thick,font=\scriptsize]
\tikzstyle{Edge}=[draw=ColBB!80,cap=round,thick,rounded corners=2.5pt]
\tikzstyle{Leaf}=[rectangle,MarkBC,inner sep=0pt,minimum size=1mm,thick]
\tikzstyle{NodeST}=[font=\footnotesize]
\tikzstyle{EdgeLabel}=[midway,inner sep=1pt,fill=ColWhite!0,font=\scriptsize]
\tikzstyle{LeafLabel}=[font=\scriptsize,node distance=2mm]
\tikzstyle{Subtree}=[regular polygon,regular polygon sides=3,MarkA,thick,minimum size=5mm,
\tikzstyle{PathNode}=[circle,MarkA,thick,inner sep=0pt,minimum size=2mm]
\tikzstyle{PathStep}=[color=Col1!60,thick]
\tikzstyle{Injection}=[ColBlack!100,draw,{>[scale=1.5,length=4,width=5]}-{>[scale=1.5,
\tikzstyle{Surjection}=[ColBlack!100,draw,-{>[scale=1.5,length=4,width=5]>[scale=1.5,
\tikzstyle{Map}=[ColBlack!100,draw,-{>[scale=1.5,length=4,width=5]}]
\newcommand{\N}{\mathbb{N}}
\newcommand{\Z}{\mathbb{Z}}
\newcommand{\Note}[2]{{#1}_{#2}}
\newcommand{\Notes}{\mathcal{N}}
\newcommand{\Scale}{\bm{\lambda}}
\newcommand{\DegPat}{\mathbf{d}}
\newcommand{\RhyPat}{\mathbf{r}}
\newcommand{\Pat}{\mathbf{p}}
\newcommand{\MPat}{\mathbf{m}}
\newcommand{\CPat}{\mathbf{c}}
\newcommand{\Length}{\ell}
\newcommand{\Mirror}{\mathrm{mir}}
\newcommand{\Rest}{{\square}}
\newcommand{\Beat}{{\blacksquare}}
\newcommand{\Unit}{\mathbf{1}}
\newcommand{\Operad}{\mathcal{O}}
\newcommand{\GeneratingSet}{\mathfrak{G}}
\newcommand{\HadamardProduct}{\boxtimes}
\newcommand{\ConstructionT}{\mathsf{T}}
\newcommand{\OperadDP}{\mathsf{DP}}
\newcommand{\OperadRP}{\mathsf{RP}}
\newcommand{\OperadP}{\mathsf{P}}
\newcommand{\OperadMP}{\OperadP}
\newcommand{\BudOperadMP}{\mathsf{B}\OperadMP}
\newcommand{\SetColors}{\mathfrak{C}}
\newcommand{\Color}{\mathtt{b}}
\newcommand{\ColoredOperad}{\mathcal{C}}
\newcommand{\Out}{\mathrm{out}}
\newcommand{\In}{\mathrm{in}}
\newcommand{\BudOperad}{\mathsf{B}}
\newcommand{\ColoredComposition}{\odot}
\newcommand{\SetRules}{\mathcal{R}}
\newcommand{\BudSystem}{\mathcal{B}}
\newcommand{\PartialDerivation}{\xrightarrow{\circ_i}}
\newcommand{\FullDerivation}{\xrightarrow{\circ}}
\newcommand{\ColoredDerivation}{\xrightarrow{\ColoredComposition}}
\newcommand{\Prune}{\mathrm{pr}}
\newcommand{\BudSystemTemporizator}{\BudSystem^\mathrm{tem}}
\newcommand{\BudSystemRhythmic}{\BudSystem^\mathrm{rhy}}
\newcommand{\BudSystemHarmonizator}{\BudSystem^\mathrm{har}}
\newcommand{\BudSystemArpeggiator}{\BudSystem^\mathrm{arp}}
\newenvironment{MultiPattern}{%
    
    \setlength\arraycolsep{.25em}
    \begin{footnotesize}
        \begin{bmatrix}
        }{
        \end{bmatrix}
    \end{footnotesize}}
\title{Generation of musical patterns through operads}
\date{\today}
\begin{document}
\maketitle

%%%%%%%%%%%%%%%%%%%%%%%%%%%%%%%%%%%%%%%%%%%%%%%%%%%%%%%%%%%%%%%%%%%%%%%%%%%%%%%%%%%%%%%%%%%%
%%%%%%%%%%%%%%%%%%%%%%%%%%%%%%%%%%%%%%%%%%%%%%%%%%%%%%%%%%%%%%%%%%%%%%%%%%%%%%%%%%%%%%%%%%%%
%%%%%%%%%%%%%%%%%%%%%%%%%%%%%%%%%%%%%%%%%%%%%%%%%%%%%%%%%%%%%%%%%%%%%%%%%%%%%%%%%%%%%%%%%%%%
\begin{abstract}
    Nous introduisons la notion de multi-motif, une abstraction combinatoire des phrases
    musicales \`a plusieurs voix.  L'int\'er\^et de cette approche r\'eside dans le fait
    qu'il devient possible de composer deux multi-motifs pour en produire un plus long. Ceci
    s'inscrit dans un contexte alg\'ebrique puisque l'ensemble des multi-motifs poss\`ede
    une structure dite d'op\'erade~; les op\'erades \'etant des structures offrant une
    formalisation de la notion d'op\'erateur et de leurs compositions. Cette vision des
    phrases musicales comme des op\'erateurs permet de r\'ealiser ainsi des calculs sur ces
    derni\`eres et admet des applications en musique g\'en\'erative~: \'etant donn\'e un
    ensemble de courts motifs, nous proposons divers algorithmes pour produire de mani\`ere
    al\'eatoire une nouvelle phrase plus longue inspir\'ee des motifs initiaux.

    \begin{center} --- \end{center}

    \noindent We introduce the notion of multi-pattern, a combinatorial abstraction of
    polyphonic musical phrases.  The interest of this approach lies in the fact that this
    offers a way to compose two multi-patterns in order to produce a longer one. This dives
    musical phrases into an algebraic context since the set of multi-patterns has the
    structure of an operad; operads being structures offering a formalization of the notion
    of operators and their compositions. Seeing musical phrases as operators allows us to
    perform computations on phrases and admits applications in generative music: given a set
    of short patterns, we propose various algorithms to randomly generate a new and longer
    phrase inspired by the inputted patterns.
\end{abstract}

%%%%%%%%%%%%%%%%%%%%%%%%%%%%%%%%%%%%%%%%%%%%%%%%%%%%%%%%%%%%%%%%%%%%%%%%%%%%%%%%%%%%%%%%%%%%
%%%%%%%%%%%%%%%%%%%%%%%%%%%%%%%%%%%%%%%%%%%%%%%%%%%%%%%%%%%%%%%%%%%%%%%%%%%%%%%%%%%%%%%%%%%%
%%%%%%%%%%%%%%%%%%%%%%%%%%%%%%%%%%%%%%%%%%%%%%%%%%%%%%%%%%%%%%%%%%%%%%%%%%%%%%%%%%%%%%%%%%%%
\section*{Introduction}
Generative music is a subfield of computational musicology in which the focus lies on the
automatic creation of musical material.  This creation is based on algorithms accepting
inputs to influence the result obtained, and having a randomized behavior in the sense that
two executions of the algorithm with the same inputs produce different results.  Several
very different approaches exist. For instance, some of them use Markov chains, others
genetic algorithms~\cite{Mat10}, still others neural networks~\cite{BHP20}, or even formal
grammars~\cite{Hol81,HMU06}.  The way in which such algorithms represent and manipulate
musical data is crucial. Indeed, the data structures used to represent musical phrases
orient the nature of the operations we can define of them. Considering operations producing
new phrases from old ones is important to specify algorithms to randomly generate music.  A
possible way for this purpose consists to give at input some musical phrases and the
algorithm creates a new one by blending them through operations. Therefore, the willingness
to endow the infinite set of all musical phrases with operations in order to obtain suitable
algebraic structures is a promising approach. Such interactions between music and algebra is
a fruitful field of investigation~\cite{Mor18,Jed19}.

In this work, we propose to use tools coming from combinatorics and algebraic combinatorics
to represent musical phrases and operations on them, in order to introduce generative
music algorithms close to the family of those based on formal grammars. More precisely, we
introduce the music box model, a very simple model to represent polyphonic phrases, called
multi-patterns. The infinite set of all these objects admits the structure of an operad.
Such structures originate from algebraic topology and are used nowadays also in algebraic
combinatorics and in computer science~\cite{Men15,Gir18}. Roughly speaking, in these
algebraic structures, the elements are operations with several inputs and the composition
law is the  usual composition of operators.  Since the set of multi-patterns forms an
operad, one can regard each pattern as an operation. The fallout of this is that each
pattern is, at the same time, a musical phrase and an operation acting on musical phrases.
In this way, our music box model and its associated operad provide an algebraic and
combinatorial framework to perform computations on musical phrases.

All this admits direct applications to design random generation algorithms since, as
introduced by the author in~\cite{Gir19}, given an operad there exist algorithms to generate
some of its elements. These algorithms are based upon bud generating systems, which are
general formal grammars based on colored operads~\cite{Yau16}. In the present work, we
propose three different variations of these algorithms to produce new musical phrases from
old ones. More precisely, our algorithm works as follows. It takes as input a finite set of
multi-patterns and an integer value to influence the size of the output. It works
iteratively by choosing patterns from the initial collection in order to alter the current
one by performing a composition using the operad structure. As we shall explain, the initial
patterns can be colored in order to forbid some compositions and avoid in this way some
musical intervals for instance. These generation algorithms are not intended to write
complete musical pieces; they are for obtaining, from short old patterns, a similar but
longer one, presenting possibly new ideas to the human composer.

This text is organized as follows. Section~\ref{sec:music_box_model} is devoted to setting
our context and notations about music theory and to introduce the music box model. In
Section~\ref{sec:operads}, we begin by presenting a brief overview of operad theory and we
build step by step the music box operad. For this, we introduce first an operad on sequences
of scale degrees, an operad on rhythm patterns, and then an operad on monophonic patterns to
end with the operad of multi-patterns. Three random generation algorithms for multi-patterns
are introduced in Section~\ref{sec:random_generation}.  Finally,
Section~\ref{sec:applications} provides some concrete applications of the previous
algorithms. We focus here on random variations of a monophonic musical phrase as input
leading to random changes of rhythm, harmonizations, and arpeggiations.

In this version of this work, most of the proofs of the announced results are omitted due to
lack of space.  A computer implementation of all the presented algorithms is, as well as its
source code and {\bf concrete} examples, available at~\cite{Gir20}.

%%%%%%%%%%%%%%%%%%%%%%%%%%%%%%%%%%%%%%%%%%%%%%%%%%%%%%%%%%%%%%%%%%%%%%%%%%%%%%%%%%%%%%%%%%%%
\subsubsection*{General notations and conventions}
For any integer $n$, $[n]$ denotes the set $\{1, \dots, n\}$.  If $a$ is a letter and $n$ is
a nonnegative integer, $a^n$ is the word consisting in $n$ occurrences of $a$. In
particular, $a^0$ is the empty word $\epsilon$.

%%%%%%%%%%%%%%%%%%%%%%%%%%%%%%%%%%%%%%%%%%%%%%%%%%%%%%%%%%%%%%%%%%%%%%%%%%%%%%%%%%%%%%%%%%%%
%%%%%%%%%%%%%%%%%%%%%%%%%%%%%%%%%%%%%%%%%%%%%%%%%%%%%%%%%%%%%%%%%%%%%%%%%%%%%%%%%%%%%%%%%%%%
%%%%%%%%%%%%%%%%%%%%%%%%%%%%%%%%%%%%%%%%%%%%%%%%%%%%%%%%%%%%%%%%%%%%%%%%%%%%%%%%%%%%%%%%%%%%
\section{The music box model} \label{sec:music_box_model}
The purpose of this section is to set some definitions and some conventions about music
theory, and introduce multi-patterns that are abstractions of musical phrases.

%%%%%%%%%%%%%%%%%%%%%%%%%%%%%%%%%%%%%%%%%%%%%%%%%%%%%%%%%%%%%%%%%%%%%%%%%%%%%%%%%%%%%%%%%%%%
%%%%%%%%%%%%%%%%%%%%%%%%%%%%%%%%%%%%%%%%%%%%%%%%%%%%%%%%%%%%%%%%%%%%%%%%%%%%%%%%%%%%%%%%%%%%
\subsection{Notes and scales}
We fit into the context of an $\eta$ tone equal temperament, also written as $\eta$-TET,
where $\eta$ is any nonnegative integer.  An \Def{$\eta$-note} is a pair $(k, n)$ where $0
\leq k \leq \eta - 1$ and $n \in \Z$. We shall write $\Note{k}{n}$ instead of $(k, n)$.  The
integer $n$ is the \Def{octave index} and $k$ is the \Def{step index} of $\Note{k}{n}$.  The
set of all $\eta$-notes is denoted by $\Notes^{(\eta)}$. Despite this level of generality,
and even if all the concepts developed in the sequel work for any~$\eta$, in most
applications and examples we shall consider that $\eta = 12$.  Therefore, under this
convention, we simply call \Def{note} any \Def{$12$-note} and write $\Notes$ for
$\Notes^{(12)}$.  We set in this context of $12$-TET  the ``middle $C$'' as the note
$\Note{0}{4}$, which is the first step of the octave of index~$4$.  An \Def{$\eta$-scale} is
an integer composition $\Scale$ of $\eta$, that is a sequence $\Par{\Scale_1, \dots,
\Scale_\ell}$ of nonnegative integers satisfying
\begin{math}
    \Scale_1 + \dots + \Scale_\ell = \eta.
\end{math}
The \Def{length} of $\Scale$ is the number $\Length(\Scale) := \ell$ of its elements.  We
simply call \Def{scale} any $12$-scale.  For instance, $(2, 2, 1, 2, 2, 2, 1)$ is the major
natural scale, $(2, 1, 2, 2, 1, 3, 1)$ is the harmonic minor scale, and $(2, 1, 4, 1, 4)$ is
the Hirajoshi scale. This encoding of a scale by an integer composition is also known under
the terminology of interval pattern.

A \Def{rooted scale} is a pair $\Par{\Scale, r}$ where $\Scale$ is a scale and $r$ is a
note. This rooted scale describes a subset $\Notes_{\Par{\Scale, r}}$ of $\Notes$ consisting
in the notes reachable from $r$ by following the steps prescribed by the values $\Scale_1$,
$\Scale_2$, \dots, $\Scale_{\Length(\Scale)}$ of $\Scale$.  For instance, if $\Scale$ is the
Hirajoshi scale, then
\begin{equation}
    \Notes_{\Par{\Scale, \Note{0}{4}}}
    = \Bra{\dots,
    \Note{7}{3}, \Note{8}{3},
    {\bf \Note{0}{4}}, 
    \Note{2}{4},
    \Note{3}{4},
    \Note{7}{4},
    \Note{8}{4}, \Note{0}{5}, \dots}.
\end{equation}
If $\Scale$ is the major natural scale, then
\begin{equation}
    \Notes_{\Par{\Scale, \Note{2}{4}}} =
    \Bra{\dots,
     \Note{1}{4},
    {\bf \Note{2}{4}}, \Note{4}{4}, \Note{6}{4}, \Note{7}{4},
    \Note{9}{4}, \Note{11}{4}, \Note{1}{5},
    \Note{2}{5}, \dots}.
\end{equation}

%%%%%%%%%%%%%%%%%%%%%%%%%%%%%%%%%%%%%%%%%%%%%%%%%%%%%%%%%%%%%%%%%%%%%%%%%%%%%%%%%%%%%%%%%%%%
%%%%%%%%%%%%%%%%%%%%%%%%%%%%%%%%%%%%%%%%%%%%%%%%%%%%%%%%%%%%%%%%%%%%%%%%%%%%%%%%%%%%%%%%%%%%
\subsection{Patterns}
We now introduce degree patterns, rhythm patterns, patterns, and finally multi-patterns.

A \Def{degree} $d$ is any element of $\Z$.  Negative degrees are denoted by putting a bar
above their absolute value. For instance, $-3$ is denoted by $\bar{3}$.  A \Def{degree
pattern} $\DegPat$ is a finite word $\DegPat_1 \dots \DegPat_\ell$ of degrees.  The
\Def{arity} of $\DegPat$, also denoted by $|\DegPat|$, is the number $\ell$ of its elements.

Given a rooted scale $(\Scale, r)$, a degree pattern $\DegPat$ specifies a sequence of notes
by assigning to the degree $0$ the note $r$, to the degree $1$ the following higher note in
$\Notes_{\Par{\Scale, r}}$ next to $r$, to the degree $\bar{1}$ the lower note in
$\Notes_{\Par{\Scale, r}}$ next to $r$, and so on.  For instance, the degree pattern $1 0
\bar{2} \bar{3} 5 0 7$ specifies, in the context of the rooted scale $\Par{\Scale,
\Note{0}{4}}$ where $\Scale$ is the major natural scale, the sequence of notes
\begin{equation*}
    \Note{2}{4}, \Note{0}{4}, \Note{9}{3}, \Note{7}{3}, \Note{9}{4},
    \Note{0}{4}, \Note{0}{5}.
\end{equation*}

A \Def{rhythm pattern} $\RhyPat$ is a finite word $\RhyPat_1 \dots \RhyPat_\ell$ on the
alphabet $\Bra{\Rest, \Beat}$.  The symbol $\Rest$ is a \Def{rest} and the symbol $\Beat$ is
a \Def{beat}. The \Def{length} of $\RhyPat$ is $\ell$ and the \Def{arity} $|\RhyPat|$ of
$\RhyPat$ is its number of occurrences of beats.  The \Def{duration sequence} of a rhythm
pattern $\RhyPat$ is the unique sequence $\Par{\alpha_0, \alpha_1, \dots,
\alpha_{|\RhyPat|}}$ of nonnegative integers such that
\begin{equation}
    \RhyPat
    =
    \Rest^{\alpha_0} \enspace \Beat \; \Rest^{\alpha_1}
    \enspace \cdots \enspace
    \Beat \; \Rest^{\alpha_{|\RhyPat|}}.
\end{equation}
The rhythm pattern $\RhyPat$ specifies a rhythm wherein each beat has a relative duration:
the rhythm begins with a silence of $\alpha_0$ units of time, followed by a first beat
sustained $1 + \alpha_1$ units of time, and so on, and finishing by a last beat sustained $1
+ \alpha_{|\RhyPat|}$ units of time. We adopt here the convention that each rest and beat
last each the same amount of time of one eighth of the duration of a whole note. Therefore,
given a tempo specifying how many there are rests and beats by minute, any rhythm pattern
encodes a rhythm.

For instance, let us consider the rhythm pattern
\begin{equation}
    \RhyPat := \Rest \Beat \Beat \Rest \Beat \Rest \Rest \Rest \Beat \Beat \Rest \Beat.
\end{equation}
The duration sequence of $\RhyPat$ is $\Par{1, 0, 1, 3, 0, 1, 0}$ so that $\RhyPat$
specifies the rhythm consisting in an eighth rest, an eighth note, a quarter note, a half
note, an eighth note, a quarter note, and finally an eighth note.

A \Def{pattern} is a pair $\Pat := (\DegPat, \RhyPat)$ such that $|\DegPat| = |\RhyPat|$.
The \Def{arity} $|\Pat|$ of $\Pat$ is the arity of both $\DegPat$ and $\RhyPat$, and the
\Def{length} $\Length(\Pat)$ of $\DegPat$ is the length $\Length(\RhyPat)$ of~$\RhyPat$.

In order to handle concise notations, we shall write any pattern $(\DegPat, \RhyPat)$ as a
word $\Pat$ on the alphabet $\Bra{\Rest} \cup \Z$ where the subword of $\Pat$ obtained by
removing all occurrences of $\Rest$ is the degree pattern $\DegPat$, and the word obtained
by replacing in $\Pat$ each integer by $\Beat$ is the rhythm pattern $\RhyPat$. For
instance,
\begin{equation} \label{equ:example_pattern}
    1 \Rest \Rest \bar{2} \Rest 1 2
\end{equation}
is the concise notation for the pattern
\begin{equation}
    \Par{1 \bar{2} 1 2, \Beat \Rest \Rest \Beat \Rest \Beat \Beat}.
\end{equation}
For this reason, thereafter, we shall see and treat any pattern $\Pat$ as a finite word
$\Pat_1 \dots \Pat_\ell$ on the alphabet $\Bra{\Rest} \cup \Z$. Remark that the length of
$\Pat$ is $\ell$ and that its arity is the number of letters of $\Z$ it has.

Given a rooted scale $(\Scale, r)$ and a tempo, a pattern $\Pat := (\DegPat, \RhyPat)$
specifies a musical phrase, that is a sequence of notes arranged into a rhythm. The notes of
the musical phrase are the ones specified by the degree pattern $\DegPat$ and their relative
durations are specified by the rhythm pattern~$\RhyPat$.
For instance, consider the pattern
\begin{equation}
    \Pat :=
    0 \Rest 1 2 \bar{1} \Rest 0 1 \bar{2} \Rest \bar{1} \bar{0} 0 \Rest \Rest \Rest.
\end{equation}
By choosing the rooted scale $\Par{\Scale, \Note{9}{3}}$ where $\Scale$ is the harmonic
minor scale, and by setting $128$ as tempo, one obtains the musical phrase
\begin{abc}[name=PhraseExample1,width=.9\abcwidth]
X:1
T:
K:Am
M:8/8
L:1/8
Q:1/8=128
A,2 B, C ^G,2 A, B, | F,2 ^G, A, A,4 |
\end{abc}

For any positive integer $m$, an \Def{$m$-multi-pattern} is an $m$-tuple $\MPat :=
\Par{\MPat^{(1)}, \dots, \MPat^{(m)}}$ of patterns such that all $\MPat^{(i)}$ have the same
arity and the same length. The \Def{arity} $|\MPat|$ of $\MPat$ is the common arity of all
the $\MPat^{(i)}$, and the \Def{length} $\Length(\MPat)$ of $\MPat$ is the common length of
all the $\MPat^{(i)}$.  An $m$-multi-pattern $\MPat$ is denoted through a matrix of
dimension $m \times \Length(\MPat)$, where the $i$-th row contains the pattern $\MPat^{(i)}$
for any $i \in [m]$.  For instance,
\begin{equation}
    \MPat :=
    \begin{MultiPattern}
        0 & \Rest & 1 & \Rest & 1 \\
        \Rest & \bar{2} & \bar{3} & \Rest & 0
    \end{MultiPattern}
\end{equation}
is a $2$-multi-pattern having arity $3$ and length $5$. The fact that all patterns of an
$m$-multi-pattern must have the same length ensures that they last the same amount of units
of time. This is important since an $m$-multi-pattern is used to handle musical sequences
consisting in $m$ stacked voices. The condition about the arities of the patterns, and
hence, about the number of degrees appearing in these, is a particularity of our model and
comes from algebraic reasons. This will be clarified later in the article.

Given a rooted scale $(\Scale, r)$ and a tempo, an $m$-multi-pattern $\MPat$ specifies a
musical phrase obtained by considering the musical phrases specified by each $\MPat^{(i)}$,
$i \in [m]$, each forming a voice.  For instance, consider the $2$-multi-pattern
\begin{equation}
    \MPat :=
    \begin{MultiPattern}
        0 & 4 & \Rest & 4 & 0 & 0 \\
        \bar{7} & \bar{7} & 0 & \Rest & \bar{3} & \bar{3}
    \end{MultiPattern}.
\end{equation}
By choosing the rooted scale $\Par{\Scale, \Note{9}{3}}$ where $\Scale$ is the minor natural
scale and by setting $128$ as tempo, one obtains the musical phrase
\begin{abc}[name=MultiPatternExample1,width=.47\abcwidth]
X:1
T:
K:Am
M:8/8
L:1/8
Q:1/8=128
V:voice1
A,1 E2 E1 A,1 A,1
V:voice2
A,,1 A,,1 A,2 E,1 E,1
\end{abc}

Due to the fact that $m$-multi-patterns evoke paper tapes of a programmable music box, we
call \Def{music box model} the model just described to represent musical phrases by
$m$-multi-patterns within the context of a rooted scale and a tempo.

%%%%%%%%%%%%%%%%%%%%%%%%%%%%%%%%%%%%%%%%%%%%%%%%%%%%%%%%%%%%%%%%%%%%%%%%%%%%%%%%%%%%%%%%%%%%
%%%%%%%%%%%%%%%%%%%%%%%%%%%%%%%%%%%%%%%%%%%%%%%%%%%%%%%%%%%%%%%%%%%%%%%%%%%%%%%%%%%%%%%%%%%%
%%%%%%%%%%%%%%%%%%%%%%%%%%%%%%%%%%%%%%%%%%%%%%%%%%%%%%%%%%%%%%%%%%%%%%%%%%%%%%%%%%%%%%%%%%%%
\section{Operad structures} \label{sec:operads}
The purpose of this section is to introduce an operad structure on multi-patterns, called
music box operad. The main interest of endowing the set of multi-patterns with the structure
of an operad is that this leads to an algebraic framework to perform computations on
patterns.

%%%%%%%%%%%%%%%%%%%%%%%%%%%%%%%%%%%%%%%%%%%%%%%%%%%%%%%%%%%%%%%%%%%%%%%%%%%%%%%%%%%%%%%%%%%%
%%%%%%%%%%%%%%%%%%%%%%%%%%%%%%%%%%%%%%%%%%%%%%%%%%%%%%%%%%%%%%%%%%%%%%%%%%%%%%%%%%%%%%%%%%%%
\subsection{A primer on operads} \label{subsec:primer_operads}
We set here the elementary notions of operad theory used in the sequel. Most of them come
from~\cite{Gir18}.

A \Def{graded set} is a set $\Operad$ decomposing as a disjoint union
\begin{equation}
    \Operad := \bigsqcup_{n \in \N} \Operad(n),
\end{equation}
where the $\Operad(n)$, $n \in \N$, are sets.  For any $x \in \Operad$, there is by
definition a unique $n \in \N$ such that $x \in \Operad(n)$ called \Def{arity} of $x$ and
denoted by~$|x|$.

A \Def{nonsymmetric operad}, or an \Def{operad} for short, is a triple $\Par{\Operad,
\circ_i, \Unit}$ such that $\Operad$ is a graded set, $\circ_i$ is a map
\begin{equation}
    \circ_i : \Operad(n) \times \Operad(m) \to \Operad(n + m - 1),
    \qquad i \in [n],
\end{equation}
called \Def{partial composition} map, and $\Unit$ is a distinguished element of
$\Operad(1)$, called \Def{unit}. This data has to satisfy, for any $x, y, z \in \Operad$,
the three relations
\begin{equation} \label{equ:operad_axiom_1}
    \Par{x \circ_i y} \circ_{i + j - 1} z = x \circ_i \Par{y \circ_j z},
    \quad i \in [|x|], \enspace j \in [|y|],
\end{equation}
\begin{equation} \label{equ:operad_axiom_2}
    \Par{x \circ_i y} \circ_{j + |y| - 1} z = \Par{x \circ_j z} \circ_i y,
    \quad 1 \leq i < j \leq |x|,
\end{equation}
\begin{equation} \label{equ:operad_axiom_3}
    \Unit \circ_1 x = x = x \circ_i \Unit,
    \quad i \in [|x|].
\end{equation}

Intuitively, an operad is an algebraic structure wherein each element can be seen as an
operator having $|x|$ inputs and one output. Such an operator is depicted as
\begin{equation}
    \begin{tikzpicture}
        [Centering,xscale=.2,yscale=.25,font=\scriptsize]
        \node[NodeST](x)at(0,0){$x$};
        \node(r)at(0,2){};
        \node(x1)at(-3,-2){};
        \node(xn)at(3,-2){};
        \node[below of=x1,node distance=1mm](ex1){$1$};
        \node[below of=xn,node distance=1mm](exn){$|x|$};
        \draw[Edge](r)--(x);
        \draw[Edge](x)--(x1);
        \draw[Edge](x)--(xn);
        \node[below of=x,node distance=6mm]{$\dots$};
    \end{tikzpicture}
\end{equation}
where the inputs are at the bottom and the output at the top.  Given two operations $x$ and
$y$ of $\Operad$, the partial composition $x \circ_i y$ is a new operator obtained by
composing $y$ into $x$ onto its $i$-th input. Pictorially, this partial composition
expresses as
\begin{equation} \label{equ:partial_compostion_on_operators}
    \begin{tikzpicture}[Centering,xscale=.24,yscale=.26,font=\scriptsize]
        \node[NodeST](x)at(0,0){$x$};
        \node(r)at(0,2){};
        \node(x1)at(-3,-2){};
        \node(xn)at(3,-2){};
        \node(xi)at(0,-2){};
        \node[below of=x1,node distance=1mm](ex1){$1$};
        \node[below of=xn,node distance=1mm](exn){$|x|$};
        \node[below of=xi,node distance=1mm](exi){$i$};
        \draw[Edge](r)--(x);
        \draw[Edge](x)--(x1);
        \draw[Edge](x)--(xn);
        \draw[Edge](x)--(xi);
        \node[right of=ex1,node distance=4mm]{$\dots$};
        \node[left of=exn,node distance=4mm]{$\dots$};
    \end{tikzpicture}
    \circ_i
    \begin{tikzpicture}[Centering,xscale=.17,yscale=.25,font=\scriptsize]
        \node[NodeST](x)at(0,0){$y$};
        \node(r)at(0,2){};
        \node(x1)at(-3,-2){};
        \node(xn)at(3,-2){};
        \node[below of=x1,node distance=1mm](ex1){$1$};
        \node[below of=xn,node distance=1mm](exn){$|y|$};
        \draw[Edge](r)--(x);
        \draw[Edge](x)--(x1);
        \draw[Edge](x)--(xn);
        \node[below of=x,node distance=6mm]{$\dots$};
    \end{tikzpicture}
    =
    \begin{tikzpicture}[Centering,xscale=.44,yscale=.3,font=\scriptsize]
        \node[NodeST](x)at(0,0){$x$};
        \node(r)at(0,1.5){};
        \node(x1)at(-3,-2){};
        \node(xn)at(3,-2){};
        \node[below of=x1,node distance=1mm](ex1){$1$};
        \node[below of=xn,node distance=1mm](exn){$|x| + |y| - 1$};
        \node[right of=ex1,node distance=8mm]{$\dots$};
        \node[left of=exn,node distance=9mm]{$\dots$};
        \draw[Edge](r)--(x);
        \draw[Edge](x)--(x1);
        \draw[Edge](x)--(xn);
        \node[NodeST](y)at(0,-2.5){$y$};
        \node(y1)at(-1.6,-4.5){};
        \node(yn)at(1.6,-4.5){};
        \node[below of=y1,node distance=1mm](ey1){$i$};
        \node[below of=yn,node distance=1mm](eyn){\qquad $i + |y| - 1$};
        \draw[Edge](y)--(y1);
        \draw[Edge](y)--(yn);
        \node[below of=y,node distance=7mm]{$\dots$};
        \draw[Edge](x)--(y);
    \end{tikzpicture}.
\end{equation}
Relations~\eqref{equ:operad_axiom_1}, \eqref{equ:operad_axiom_2},
and~\eqref{equ:operad_axiom_3} become clear when they are interpreted into this context of
abstract operators and rooted trees.

Let $\Par{\Operad, \circ_i, \Unit}$ be an operad. The \Def{full composition} map of
$\Operad$ is the map
\begin{equation}
    \circ :
    \Operad(n) \times
    \Operad\Par{m_1}
    \times \dots \times
    \Operad\Par{m_n}
    \to \Operad\Par{m_1 + \dots + m_n},
\end{equation}
defined, for any $x \in \Operad(n)$ and $y_1, \dots, y_n \in \Operad$ by
\begin{equation} \label{equ:full_composition_maps}
    x \circ \Han{y_1, \dots, y_n}
    := \Par{\dots \Par{\Par{x \circ_n y_n} \circ_{n - 1}
        y_{n - 1}} \dots} \circ_1 y_1,
\end{equation}
Intuitively, $x \circ \Han{y_1, \dots, y_n}$ is obtained by grafting simultaneously the
outputs of all the $y_i$ onto the $i$-th inputs of~$x$.

Let $\Par{\Operad', \circ'_i, \Unit'}$ be a second operad. A map $\phi : \Operad \to
\Operad'$ is an \Def{operad morphism} if for any $x \in \Operad(n)$, $\phi(x) \in
\Operad'(n)$, $\phi(\Unit) = \Unit'$, and for any $x, y \in \Operad$ and $i \in [|x|]$,
\begin{equation} \label{equ:operad_morphisms}
    \phi(x \circ_i y) = \phi(x) \circ'_i \phi(y).
\end{equation}
If instead~\eqref{equ:operad_morphisms} holds by replacing the second occurrence of $i$ by
$|x| + 1 - i$, then $\phi$ is an \Def{operad antimorphism}.  We say that $\Operad'$ is a
\Def{suboperad} of $\Operad$ if for any $n \in \N$, $\Operad'(n)$ is a subset of
$\Operad(n)$, $\Unit = \Unit'$, and for any $x, y \in \Operad'$ and $i \in [|x|]$, $x
\circ_i y = x \circ'_i y$. For any subset $\GeneratingSet$ of $\Operad$, the \Def{operad
generated} by $\GeneratingSet$ is the smallest suboperad $\Operad^{\GeneratingSet}$ of
$\Operad$ containing $\GeneratingSet$.  When $\Operad^{\GeneratingSet} = \Operad$ and
$\GeneratingSet$ is minimal with respect to the inclusion among the subsets of
$\GeneratingSet$ satisfying this property, $\GeneratingSet$ is a \Def{minimal generating
set} of $\Operad$ and its elements are \Def{generators} of~$\Operad$.

The \Def{Hadamard product} of $\Operad$ and $\Operad'$ is the operad $\Operad
\HadamardProduct \Operad'$ defined, for any $n \in \N$, by $\Par{\Operad \HadamardProduct
\Operad'}(n) := \Operad(n) \times \Operad'(n)$, endowed with the partial composition map
$\circ''_i$ defined, for any $\Par{x, x'}, \Par{y, y'} \in \Operad \HadamardProduct
\Operad'$ and $i \in \Han{\left|\Par{x, x'}\right|}$, by
\begin{equation}
    \Par{x, x'} \circ''_i \Par{y, y'} :=
    \Par{x \circ_i y, x' \circ'_i y'},
\end{equation}
and having $\Par{\Unit, \Unit'}$ as unit.

%%%%%%%%%%%%%%%%%%%%%%%%%%%%%%%%%%%%%%%%%%%%%%%%%%%%%%%%%%%%%%%%%%%%%%%%%%%%%%%%%%%%%%%%%%%%
%%%%%%%%%%%%%%%%%%%%%%%%%%%%%%%%%%%%%%%%%%%%%%%%%%%%%%%%%%%%%%%%%%%%%%%%%%%%%%%%%%%%%%%%%%%%
\subsection{The music box operad}
We build an operad on multi-patterns step by step by introducing an operad on degree
patterns and an operad on rhythm patterns. The operad of patterns is constructed as the
Hadamard product of the two previous ones. Finally, the operad of multi-patterns if a
suboperad of an iterated Hadamard product of the operad of patterns with itself.

Let $\OperadDP$ be the graded collection of all degree patterns, wherein for any $n \in \N$,
$\OperadDP(n)$ is the set of all degree patterns of arity $n$. Let us define on $\OperadDP$
the partial composition $\circ_i$ wherein, for any degree patterns $\DegPat$ and $\DegPat'$,
and any integer $i \in [|\DegPat|]$,
\begin{footnotesize}
\begin{equation}
    \DegPat \circ_i \DegPat' :=
    \DegPat_1 \dots \DegPat_{i - 1}
    \Par{\DegPat_i + \DegPat'_1} \dots \Par{\DegPat_i + \DegPat'_{|\DegPat'|}}
    \DegPat_{i + 1} \dots \DegPat_{|\DegPat|}.
\end{equation}
\end{footnotesize}
For instance,
\begin{equation}
    0 {\bf 1} 234 \circ_2 {\bf \bar{1}10} = 0 {\bf 021} 234.
\end{equation}
We denote by $\epsilon$ the empty degree pattern. This element is the only one
of~$\OperadDP(0)$.

\begin{Proposition} \label{prop:operad_degree_patterns}
    The triple $\Par{\OperadDP, \circ_i, 0}$ is an operad.
\end{Proposition}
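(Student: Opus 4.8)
The plan is to verify directly the three operad axioms \eqref{equ:operad_axiom_1}, \eqref{equ:operad_axiom_2}, and \eqref{equ:operad_axiom_3} from the definition of $\circ_i$. The computations become transparent once one records an explicit formula for the letters of a composite. For $\DegPat$ of arity $n$, $\DegPat'$ of arity $m$, and $i \in [n]$, the word $\DegPat \circ_i \DegPat'$ has arity $n + m - 1$ and its $k$-th letter equals $\DegPat_k$ when $k < i$, equals $\DegPat_i + \DegPat'_{k - i + 1}$ when $i \leq k \leq i + m - 1$, and equals $\DegPat_{k - m + 1}$ when $k > i + m - 1$. I would establish this formula once and then read off both sides of each axiom from it.

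The unit axiom \eqref{equ:operad_axiom_3} is immediate since $\Unit = 0$ is the one-letter pattern whose single degree is $0$: on the one hand $0 \circ_1 \DegPat$ replaces the unique letter $0$ by the block $(0 + \DegPat_1) \dots (0 + \DegPat_{|\DegPat|}) = \DegPat$, and on the other hand $\DegPat \circ_i 0$ replaces $\DegPat_i$ by the single letter $\DegPat_i + 0 = \DegPat_i$, leaving $\DegPat$ unchanged.

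For the nested associativity \eqref{equ:operad_axiom_1}, with $x = \DegPat$, $y = \DegPat'$, $z = \DegPat''$, $i \in [|\DegPat|]$, and $j \in [|\DegPat'|]$, the two sides are equal letter by letter. The essential point is the tracking of the shifts. On both sides, the letters inherited from $\DegPat$ remain unshifted, those inherited from $\DegPat'$ are shifted by $\DegPat_i$, and those inherited from $\DegPat''$ are shifted by $\DegPat_i + \DegPat'_j$. For the right-hand side this holds because the inner composite $\DegPat' \circ_j \DegPat''$ already carries the letters $\DegPat'_j + \DegPat''_\bullet$, which the outer composition then shifts by $\DegPat_i$; for the left-hand side it holds because the letter occupying position $i + j - 1$ of $\DegPat \circ_i \DegPat'$ is exactly $\DegPat_i + \DegPat'_j$, which is the value used as shift when inserting $\DegPat''$ there. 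The remaining work, and the main obstacle, is the routine but careful index bookkeeping showing that the $\DegPat''$-block occupies the same range of positions on both sides and that the surrounding letters align; the letter formula above reduces this to comparing its three cases.

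Finally, the parallel axiom \eqref{equ:operad_axiom_2}, with $1 \leq i < j \leq |\DegPat|$, reflects that two insertions at distinct positions commute. Inserting $\DegPat'$ (shifted by $\DegPat_i$) and $\DegPat''$ (shifted by $\DegPat_j$) affects disjoint ranges of positions, and the index shift $|y| - 1$ appearing in the statement is precisely what compensates for the displacement caused by the first insertion; applying the letter formula to both orders of insertion yields the same word, since the two shifts $\DegPat_i$ and $\DegPat_j$ are independent of each other. This establishes the three axioms and proves that $\Par{\OperadDP, \circ_i, 0}$ is an operad.
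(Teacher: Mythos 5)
Your direct verification is correct, but it takes a genuinely different route from the paper. The paper's proof is a one-line reduction: it observes that $\Par{\OperadDP, \circ_i, 0}$ is precisely the image of the additive monoid $\Par{\Z, +, 0}$ under the construction $\ConstructionT$ of~\cite{Gir15}, which associates an operad with any monoid, so the axioms hold automatically. Your letter-by-letter check is essentially the proof of that general fact specialized to $\Z$: the places where you track how shifts accumulate (the $\DegPat''$-block being shifted by $\DegPat_i + \DegPat'_j$ on both sides of the nested axiom, and the two shifts $\DegPat_i$, $\DegPat_j$ acting independently in the parallel axiom) are exactly where associativity of $+$ and unitality of $0$ enter, which is what the monoid structure packages. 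What your approach buys is a self-contained, elementary argument that requires no external reference; what the paper's approach buys is brevity and the conceptual placement of $\OperadDP$ inside a known family of operads $\ConstructionT M$, so that properties proved in general for that construction transfer for free. The one caveat is that you leave the index bookkeeping for the two associativity axioms at the level of a sketch; the key identities you state are the right ones and the details are routine, but a fully written-out proof would spell out the three-case comparison of positions via your letter formula.
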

\begin{proof}
    This is the consequence of the fact that $\Par{\OperadDP, \circ_i, 0}$ is the image of
    the monoid $\Par{\Z, +, 0}$ by the construction $\ConstructionT$ defined
    in~\cite{Gir15}. Since this construction associates an operad with any monoid, the
    result follows.
\end{proof}

We call $\OperadDP$ the \Def{degree pattern operad}.

\begin{Proposition} \label{prop:generating_set_operad_degree_patterns}
    The operad $\OperadDP$ admits $\Bra{\epsilon, \bar{1}, 1, 00}$ and $\Bra{\epsilon,
    \bar{1} 1}$ as minimal generating sets.
\end{Proposition}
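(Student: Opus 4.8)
The plan is to treat the two claimed generating sets separately, in each case proving both that the set generates $\OperadDP$ and that it is minimal. Throughout I use that any suboperad contains the unit $0$, so $0$ is available for free, and that $\epsilon$ is forced into every generating set: since $\epsilon$ is the only element of arity $0$ and the arity of $x \circ_i y$ equals $|x| + |y| - 1$, producing an element of arity $0$ requires composing with an element of arity $0$, that is with $\epsilon$ itself. Hence $\epsilon$ can never be obtained from strictly shorter data and must be a generator of both sets.

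First I would establish generation by $\GeneratingSet_1 := \Bra{\epsilon, \bar{1}, 1, 00}$, arguing by arity. For arity $1$, every single degree $d \in \Z$ is obtained from $1$ and $\bar{1}$ by iterated self-composition, using $1 \circ_1 1 = 2$, $\bar{1} \circ_1 \bar{1} = \bar{2}$, and more generally $d \circ_1 1 = d+1$ and $d \circ_1 \bar{1} = d-1$ (the unit $0$ being available in any case, or obtainable as $1 \circ_1 \bar{1} = 0$). For higher arity, I would first produce the all-zero pattern $0^\ell$ for every $\ell \geq 1$ by iterating $00 \circ_1 00 = 000$, then reconstruct an arbitrary pattern $\DegPat = \DegPat_1 \dots \DegPat_\ell$ by composing $0^\ell$ successively, at each position $i$, with the arity-$1$ pattern $\DegPat_i$. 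The key point making this work is that $0^\ell \circ_i \DegPat_i$ only modifies position $i$, replacing its value $0$ by $0 + \DegPat_i = \DegPat_i$ and leaving the other positions untouched; performing these $\ell$ single-position substitutions therefore yields exactly $\DegPat$.

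For the minimality of $\GeneratingSet_1$ I would argue that each element is irredundant. The element $\epsilon$ is needed as above. The element $00$ is needed because it is the only generator of arity at least $2$, and composing arity-$1$ elements always yields an arity-$1$ element (the arities adding with a $-1$ correction), so without an arity-$\geq 2$ generator one never escapes arity $1$. To see that $1$ is needed, I would use the \emph{invariant} that all entries are $\leq 0$: the generators $\epsilon$, $\bar{1}$, $00$ and the unit $0$ all satisfy it, and partial composition preserves it since every new entry of $\DegPat \circ_i \DegPat'$ is either an old entry of $\DegPat$ or a sum $\DegPat_i + \DegPat'_k$ of two nonpositive values; as $1$ violates this invariant, it cannot be generated from $\Bra{\epsilon, \bar{1}, 00}$. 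By the symmetric invariant that all entries are $\geq 0$, the element $\bar{1}$ is likewise needed.

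Finally I would handle $\GeneratingSet_2 := \Bra{\epsilon, \bar{1}1}$ by reducing it to the first case. Composing with $\epsilon$ extracts the arity-$1$ generators, $\bar{1}1 \circ_1 \epsilon = 1$ and $\bar{1}1 \circ_2 \epsilon = \bar{1}$, and a short computation gives $00 = \Par{\bar{1}1 \circ_1 1} \circ_2 \bar{1}$, using $\bar{1}1 \circ_1 1 = 01$ and $01 \circ_2 \bar{1} = 00$. Thus $\GeneratingSet_2$ generates $\GeneratingSet_1$ and hence all of $\OperadDP$. For minimality, $\epsilon$ is needed as always, and $\bar{1}1$ is needed because the suboperad generated by $\Bra{\epsilon}$ and the unit consists only of $0$ and $\epsilon$ (this pair being closed under the available compositions), so it cannot contain the arity-$2$ element $\bar{1}1$. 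The main obstacle is the general generation step for $\GeneratingSet_1$: one must reconstruct an arbitrary degree pattern while controlling the additive shifts introduced by partial composition, and verify that the single-position substitutions into $0^\ell$ do not interfere with one another; once this is secured, the second generating set and all four minimality claims follow quickly.
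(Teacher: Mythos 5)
Your proof is correct. The paper itself omits the proof of this proposition (it states that most proofs are left out for lack of space), so there is no official argument to compare against; your route --- generation by building $0^\ell$ from $00$ and then performing single-position substitutions with the arity-one elements $\DegPat_i$, and minimality via the arity obstruction for $\epsilon$ and $00$ together with the sign invariants (all entries $\leq 0$, resp.\ $\geq 0$) for $\bar{1}$ and $1$ --- is sound, and the reduction of $\Bra{\epsilon, \bar{1}1}$ to the first set through $\bar{1}1 \circ_1 \epsilon = 1$, $\bar{1}1 \circ_2 \epsilon = \bar{1}$, and $\Par{\bar{1}1 \circ_1 1} \circ_2 \bar{1} = 00$ checks out in every detail.
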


Let $\OperadRP$ be the graded collection of all rhythm patterns, wherein for any $n \in \N$,
$\OperadRP(n)$ is the set of all rhythm patterns of arity $n$. Let us define on $\OperadRP$
the partial composition $\circ_i$ wherein, for any rhythm patterns $\RhyPat$ and $\RhyPat'$,
and any integer $i \in [|\RhyPat|]$, $\RhyPat \circ_i \RhyPat'$ is obtained by replacing the
$i$-th occurrence of $\Beat$ in $\RhyPat$ by $\RhyPat'$.  For instance,
\begin{equation}
    \Beat \Beat \Rest \Beat \Rest \Rest \Beat
    \circ_3
    \Rest \Beat \Rest \Beat
    =
    \Beat \Beat \Rest \; \Rest \Beat \Rest \Beat \; \Rest \Rest \Beat.
\end{equation}
We denote by $\epsilon$ the empty rhythm pattern. This element is not the only one of
$\OperadRP(0)$ since $\OperadRP(0) = \Bra{\Rest^\alpha : \alpha \in \N}$.

\begin{Proposition} \label{prop:operad_rhythm_patterns}
    The triple $\Par{\OperadRP, \circ_i, \Beat}$ is an operad.
\end{Proposition}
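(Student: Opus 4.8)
The plan is to verify directly the three operad axioms \eqref{equ:operad_axiom_1}, \eqref{equ:operad_axiom_2}, and \eqref{equ:operad_axiom_3}, working from the combinatorial description of $\circ_i$ as the substitution of a rhythm pattern into the $i$-th beat. First I would note that $\circ_i$ lands in the right graded component: replacing one beat of $\RhyPat$ (which accounts for $1$ unit of arity) by $\RhyPat'$ (which accounts for $\Brr{\RhyPat'}$ units) produces a pattern of arity $\Brr{\RhyPat} + \Brr{\RhyPat'} - 1$, as required. The unit axiom \eqref{equ:operad_axiom_3} is then immediate: $\Beat \circ_1 \RhyPat$ substitutes $\RhyPat$ into the unique beat of $\Beat$ and returns $\RhyPat$, while $\RhyPat \circ_i \Beat$ replaces the $i$-th beat of $\RhyPat$ by a single beat and leaves $\RhyPat$ unchanged.

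For the two associativity axioms I would isolate the following bookkeeping observation about beat positions. In $\RhyPat \circ_i \RhyPat'$, reading beats from left to right, the first $i - 1$ beats are the beats $1, \dots, i-1$ of $\RhyPat$, the next $\Brr{\RhyPat'}$ beats are exactly the beats introduced by $\RhyPat'$ (so the $j$-th beat of $\RhyPat'$ sits at position $i + j - 1$), and the last beats are the beats $i+1, \dots, \Brr{\RhyPat}$ of $\RhyPat$. Granting this, axiom \eqref{equ:operad_axiom_1} follows because the target beat $i + j - 1$ of the outer composition in $\Par{\RhyPat \circ_i \RhyPat'} \circ_{i+j-1} \RhyPat''$ is precisely the $j$-th beat of the inserted copy of $\RhyPat'$; substituting $\RhyPat''$ there is the same as first forming $\RhyPat' \circ_j \RhyPat''$ and then inserting it into the $i$-th beat of $\RhyPat$, which is $\RhyPat \circ_i \Par{\RhyPat' \circ_j \RhyPat''}$. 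For axiom \eqref{equ:operad_axiom_2} with $i < j$, the two substitutions target two distinct beats of $\RhyPat$ and therefore commute: inserting $\RhyPat'$ at beat $i$ shifts the original beat $j$ of $\RhyPat$ to position $j + \Brr{\RhyPat'} - 1$, whereas inserting $\RhyPat''$ at beat $j$ leaves beat $i$ at position $i$; performing both insertions gives the same word in either order, which is exactly the claimed equality $\Par{\RhyPat \circ_i \RhyPat'} \circ_{j + \Brr{\RhyPat'} - 1} \RhyPat'' = \Par{\RhyPat \circ_j \RhyPat''} \circ_i \RhyPat'$.

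The only real difficulty is the index bookkeeping behind the off-by-one shifts, and the cleanest way to make it airtight is to carry out the verification on duration sequences. Encoding a rhythm pattern of arity $n$ by its duration sequence $\Par{\alpha_0, \dots, \alpha_n}$, the composition becomes
\[
    \Par{\alpha_0, \dots, \alpha_n} \circ_i \Par{\beta_0, \dots, \beta_m}
    = \Par{\alpha_0, \dots, \alpha_{i-1} + \beta_0, \beta_1, \dots, \beta_{m-1}, \beta_m + \alpha_i, \dots, \alpha_n},
\]
where only the two boundary entries $\alpha_{i-1}$ and $\alpha_i$ are modified, by adding to them the first and last entries of the inner sequence. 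In this presentation the three axioms reduce to the associativity and commutativity of the addition of nonnegative integers together with the evident concatenation pattern of these sequences, so I expect the remaining checks to be entirely routine.
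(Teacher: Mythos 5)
Your proof is correct. Note that the paper does not actually supply a proof of this proposition: it is one of the results whose proofs are ``omitted due to lack of space,'' so there is nothing to compare against line by line. Your direct verification of the three axioms is sound: the arity count $\Brr{\RhyPat} + \Brr{\RhyPat'} - 1$, the unit checks for $\Beat$, the locality argument for~\eqref{equ:operad_axiom_1} (the beat at position $i+j-1$ of $\RhyPat \circ_i \RhyPat'$ is the $j$-th beat of the inserted copy of $\RhyPat'$), and the disjointness/commutation argument for~\eqref{equ:operad_axiom_2} are all the right observations, and the translation into duration sequences does reduce everything to associativity and commutativity of addition in $\N$. One small point to tidy if you write this up in full: your displayed duration-sequence formula silently assumes $m \geq 1$; when $\Brr{\RhyPat'} = 0$ the inner sequence is a single entry $\Par{\beta_0}$ and the two boundary additions collapse onto one entry, giving $\alpha_{i-1} + \beta_0 + \alpha_i$ --- this case matters since $\OperadRP(0)$ is nonempty, though your word-substitution argument already covers it. It is also worth observing that, whereas the paper proves the analogous Proposition~\ref{prop:operad_degree_patterns} by recognizing $\OperadDP$ as the image of the monoid $\Par{\Z, +, 0}$ under the construction $\ConstructionT$ of~\cite{Gir15}, no such shortcut is invoked for $\OperadRP$, so an elementary hands-on verification like yours is a perfectly reasonable way to fill the gap.
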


We call $\OperadRP$ the \Def{rhythm pattern operad}.

\begin{Proposition} \label{prop:generating_set_operad_rhythm_pattern}
    The operad $\OperadRP$ admits $\Bra{\epsilon, \Rest, \Beat \Beat}$ as minimal generating
    set.
\end{Proposition}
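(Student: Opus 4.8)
The plan is to establish the two required facts separately: that $\GeneratingSet := \Bra{\epsilon, \Rest, \Beat\Beat}$ generates $\OperadRP$, and that it is minimal. Throughout I keep in mind that any suboperad contains the unit $\Beat$ of $\OperadRP$, so the element $\Beat$ is freely available inside $\OperadRP^{\GeneratingSet}$ even though it is not listed in $\GeneratingSet$. For the generation part I would argue by induction on the length $\Length(\RhyPat)$ of a rhythm pattern $\RhyPat$, exploiting the fact that the partial composition $\RhyPat \circ_i \RhyPat'$ replaces the $i$-th occurrence of $\Beat$ in $\RhyPat$ by the whole word $\RhyPat'$.

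First I would record the auxiliary element $\Rest \Beat = \Par{\Beat \Beat} \circ_1 \Rest$, which therefore lies in $\OperadRP^{\GeneratingSet}$. Then, writing a nonempty pattern as $\RhyPat = s\, \RhyPat''$ where $s \in \Bra{\Rest, \Beat}$ is its first symbol and $\RhyPat''$ is the remaining suffix, I would treat two cases. If $s = \Rest$, then $\RhyPat = \Par{\Rest \Beat} \circ_1 \RhyPat''$; if $s = \Beat$, then $\RhyPat = \Par{\Beat \Beat} \circ_2 \RhyPat''$. In both cases $\RhyPat''$ has strictly smaller length and is generated by the induction hypothesis, so $\RhyPat \in \OperadRP^{\GeneratingSet}$; the base case $\Length(\RhyPat) = 0$ forces $\RhyPat = \epsilon$, a generator. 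The only point to check with care is that these two composition identities really do prepend the desired symbol, which follows at once from the ``replace the $i$-th beat'' description of $\circ_i$, and that they remain valid when $\RhyPat''$ has arity $0$, i.e. is some $\Rest^\alpha$.

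For minimality it suffices to show that deleting any single element of $\GeneratingSet$ destroys the generation property, since any proper generating subset would be contained in one obtained by a single deletion. Here I would exhibit three invariants, each preserved by $\circ_i$ and satisfied by the two surviving generators together with the unit $\Beat$, yet violated by the deleted one. Removing $\Beat \Beat$: every pattern built from $\Bra{\epsilon, \Rest, \Beat}$ has arity at most $1$, because $\circ_i$ sends arities $n, m \leq 1$ to $n + m - 1 \leq 1$, so arity $2$ is unreachable. Removing $\Rest$: every pattern built from $\Bra{\epsilon, \Beat \Beat, \Beat}$ contains no $\Rest$, since $\circ_i$ cannot create a rest out of rest-free inputs. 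Removing $\epsilon$: every pattern built from $\Bra{\Rest, \Beat \Beat, \Beat}$ has length at least $1$, because whenever $\Length(\RhyPat) \geq 1$ and $\Length(\RhyPat') \geq 1$ one has $\Length\Par{\RhyPat \circ_i \RhyPat'} = \Length(\RhyPat) - 1 + \Length(\RhyPat') \geq 1$, so the length-$0$ pattern $\epsilon$ is unreachable.

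I do not expect a single hard obstacle, since both halves reduce to short inductions; the main care lies in the generation step, namely checking the two prepending identities and ensuring the induction handles the arity-$0$ patterns $\Rest^\alpha$ uniformly. The minimality invariants are then routine, once one observes that ruling out single-element deletions is enough.
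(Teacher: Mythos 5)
Your proof is correct. The paper itself omits the proof of this proposition (it states that most proofs are left out for lack of space), so there is nothing to compare against; your argument is the natural one and all its steps check out: the prepending identities $\Rest\,\RhyPat'' = \Par{\Par{\Beat\Beat} \circ_1 \Rest} \circ_1 \RhyPat''$ and $\Beat\,\RhyPat'' = \Par{\Beat\Beat} \circ_2 \RhyPat''$ do hold under the ``replace the $i$-th beat'' definition of $\circ_i$ (including when $\RhyPat''$ has arity $0$), the induction on length is well-founded, and the three invariants (arity at most $1$, absence of $\Rest$, length at least $1$) are each preserved by partial composition and correctly rule out the three single-element deletions, which suffices for minimality by monotonicity of the generated suboperad.
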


Let $\OperadP$ be the operad defined as
\begin{equation}
    \OperadP := \OperadDP \HadamardProduct \OperadRP.
\end{equation}
Since a pattern is a pair $(\DegPat, \RhyPat)$ where $\DegPat$ is a degree pattern and
$\RhyPat$ is a rhythm pattern of the same arity, for any $n \in \N$, $\OperadP(n)$ is in
fact the set of all patterns of arity $n$. For this reason, $\OperadP$ is the graded set of
all patterns. We call $\OperadP$ the \Def{pattern operad}.  For instance, by using the
concise notation for patterns,
\begin{equation}
    \Rest \bar{2} {\bf 1} \Rest 1 \circ_2 {\bf 0 \Rest \Bar{1}}
    = \Rest \bar{2} \; {\bf 1 \Rest 0} \; \Rest 1.
\end{equation}
We denote by $\epsilon$ the empty pattern.

\begin{Proposition} \label{prop:generating_set_operad_patterns}
    The operad $\OperadP$ admits $\Bra{\epsilon, \Rest, \bar{1}, 1, 00}$ and $\Bra{\epsilon,
    \Rest, \bar{1} 1}$ as minimal generating sets.
\end{Proposition}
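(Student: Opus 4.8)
The plan is to exploit the Hadamard product structure $\OperadP = \OperadDP \HadamardProduct \OperadRP$ while being careful about a well-known pitfall: the Hadamard product of two operads is in general \emph{not} generated by the pairs of generators of its factors, so one cannot simply multiply the generating sets of Propositions~\ref{prop:generating_set_operad_degree_patterns} and~\ref{prop:generating_set_operad_rhythm_pattern}. The key preliminary observations are that the unit of $\OperadP$ is the pattern $0 = \Par{0, \Beat}$, hence it is automatically contained in every suboperad, and that the map $\iota : \OperadDP \to \OperadP$ sending a degree pattern $\DegPat$ to the full-rhythm pattern $\Par{\DegPat, \Beat^{|\DegPat|}}$ is an injective operad morphism. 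Indeed, composing two all-beats rhythm patterns again yields an all-beats rhythm pattern, so the image of $\iota$ is the suboperad of \emph{rest-free} patterns (those whose length equals their arity), which is isomorphic to $\OperadDP$.

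First I would prove that the two proposed sets generate $\OperadP$. Through $\iota$, Proposition~\ref{prop:generating_set_operad_degree_patterns} shows that $\Bra{\epsilon, \bar{1}, 1, 00}$ (respectively $\Bra{\epsilon, \bar{1} 1}$) generates the whole rest-free suboperad, hence every pattern of the form $\Par{\DegPat, \Beat^{|\DegPat|}}$. It then remains to create rests. Given an arbitrary pattern $\Par{\DegPat, \RhyPat}$ of length $\ell$, let $\DegPat^{\mathrm{ext}}$ be the degree pattern of length $\ell$ obtained from $\DegPat$ by inserting a $0$ at each rest position of $\RhyPat$. I would then check the componentwise identity
\begin{equation*}
    \Par{\DegPat, \RhyPat} = \Par{\DegPat^{\mathrm{ext}}, \Beat^\ell} \circ \Han{y_1, \dots, y_\ell},
\end{equation*}
where $y_p := \Rest$ if position $p$ is a rest of $\RhyPat$ and $y_p := 0$ otherwise. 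This is routine from the definitions: composing with the unit $0$ leaves a slot untouched, while composing with $\Rest = \Par{\epsilon, \Rest}$ turns the corresponding beat into a rest in the rhythm component and deletes its placeholder degree in the degree component. Since $\Par{\DegPat^{\mathrm{ext}}, \Beat^\ell}$ lies in the rest-free suboperad and each $y_p$ is either the unit or $\Rest$, adjoining $\Rest$ to either set suffices; this already covers the pure-rest patterns, for which $\DegPat^{\mathrm{ext}} = 0^\ell$ and all slots are filled with $\Rest$.

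Next I would establish minimality of both sets by exhibiting suboperads, obtained as invariants preserved by $\circ_i$, that contain all but one generator. Necessity of $\epsilon$: all other generators have length $\geq 1$ and $\Length\Par{x \circ_i y} = \Length(x) + \Length(y) - 1$, so no composition of them reaches length $0$. Necessity of $\Rest$: the rest-free patterns form a suboperad containing $0$ and the remaining generators but not $\Rest$. Necessity of $00$ (respectively $\bar{1} 1$): the remaining generators all have arity $\leq 1$, and $\Brr{x \circ_i y} = \Brr{x} + \Brr{y} - 1$ keeps arity $\leq 1$, so no arity-$2$ pattern is reachable. Finally, for the first set, necessity of $\bar{1}$ and of $1$: the patterns all of whose degrees are $\geq 0$ (respectively $\leq 0$) form a suboperad, since the partial composition of $\OperadDP$ only replaces degrees by sums of degrees; the former contains $\Bra{\epsilon, \Rest, 1, 00}$ but not $\bar{1}$, the latter contains $\Bra{\epsilon, \Rest, \bar{1}, 00}$ but not $1$. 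In each case the remaining set falls inside a proper suboperad missing the removed element.

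The hard part is precisely the decoupling of rhythm and degrees inside the Hadamard product: a naive combination of the generators of $\OperadDP$ and $\OperadRP$ fails because their partial compositions are forced to act in lockstep at the same index $i$. The device that resolves this is that $\Rest$ has arity $0$, so composing with it at a beat simultaneously converts that beat to a rest and erases the matching degree, realizing exactly the projection onto the beat positions. This lets me fix all degrees first on an all-beats skeleton and only afterwards carve out the rests, so that the two components never interfere. The remaining verifications — the componentwise identity displayed above and the closure of the various invariants under $\circ_i$ — are immediate from the definitions and from the unit axiom~\eqref{equ:operad_axiom_3}.
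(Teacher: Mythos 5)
Your argument is correct. The paper itself omits the proof of this proposition (it states that most proofs are omitted for lack of space), so there is nothing to compare against; judged on its own, your proposal is complete. The two pillars both check out: the rest-free patterns $\Par{\DegPat, \Beat^{|\DegPat|}}$ form a suboperad isomorphic to $\OperadDP$, which imports Proposition~\ref{prop:generating_set_operad_degree_patterns}, and the factorization $\Par{\DegPat, \RhyPat} = \Par{\DegPat^{\mathrm{ext}}, \Beat^\ell} \circ \Han{y_1, \dots, y_\ell}$ is valid because the full composition map of~\eqref{equ:full_composition_maps} proceeds from position $\ell$ down to position $1$, so the arity drops caused by the arity-zero element $\Rest$ never shift the indices still to be processed. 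Your identification of $\Rest$ as the device that decouples the two Hadamard factors is exactly the right observation, and the minimality arguments (length and arity invariants, the rest-free suboperad, and the sign-constrained suboperads excluding $\bar{1}$ and $1$ respectively) each exhibit a suboperad containing all generators but the one removed, which is what minimality with respect to inclusion requires.
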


A consequence of Proposition~\ref{prop:generating_set_operad_patterns} is that any pattern
$\Pat$ expresses as a tree on the internal nodes in $\Bra{\epsilon, \Rest, \bar{1}, 1, 00}$
or in $\Bra{\epsilon, \Rest, \bar{1} 1}$. For instance, the pattern
$\Pat := \bar{1} \Rest \Rest 1 \Rest 3$ expresses as the trees
\begin{equation}
    \begin{tikzpicture}[Centering,xscale=0.26,yscale=0.23]
        \node(1)at(0.00,-9.14){};
        \node(10)at(8.00,-6.86){$\Rest$};
        \node(15)at(10.00,-13.71){};
        \node(3)at(2.00,-6.86){$\Rest$};
        \node(5)at(4.00,-4.57){$\Rest$};
        \node(8)at(6.00,-6.86){};
        \node[NodeST](0)at(0.00,-6.86){$\bar{1}$};
        \node[NodeST](11)at(9.00,-4.57){$00$};
        \node[NodeST](12)at(10.00,-6.86){$1$};
        \node[NodeST](13)at(10.00,-9.14){$1$};
        \node[NodeST](14)at(10.00,-11.43){$1$};
        \node[NodeST](2)at(1.00,-4.57){$00$};
        \node[NodeST](4)at(3.00,-2.29){$00$};
        \node[NodeST](6)at(5.00,0.00){$00$};
        \node[NodeST](7)at(6.00,-4.57){$1$};
        \node[NodeST](9)at(7.00,-2.29){$00$};
        \draw[Edge](0)--(2);
        \draw[Edge](1)--(0);
        \draw[Edge](10)--(11);
        \draw[Edge](11)--(9);
        \draw[Edge](12)--(11);
        \draw[Edge](13)--(12);
        \draw[Edge](14)--(13);
        \draw[Edge](15)--(14);
        \draw[Edge](2)--(4);
        \draw[Edge](3)--(2);
        \draw[Edge](4)--(6);
        \draw[Edge](5)--(4);
        \draw[Edge](7)--(9);
        \draw[Edge](8)--(7);
        \draw[Edge](9)--(6);
        \node(r)at(5.00,2){};
        \draw[Edge](r)--(6);
    \end{tikzpicture}
    \quad
    \mbox{or}
    \quad
    \begin{tikzpicture}[Centering,xscale=0.21,yscale=0.20]
        \node(0)at(0.00,-14.17){};
        \node(10)at(10.00,-8.50){};
        \node(12)at(12.00,-5.67){$\Rest$};
        \node(14)at(14.00,-8.50){$\epsilon$};
        \node(16)at(16.00,-8.50){};
        \node(2)at(2.00,-14.17){$\epsilon$};
        \node(4)at(4.00,-11.33){$\epsilon$};
        \node(6)at(6.00,-8.50){$\Rest$};
        \node(8)at(8.00,-8.50){$\Rest$};
        \node[NodeST](1)at(1.00,-11.33){$\bar{1} 1$};
        \node[NodeST](11)at(11.00,0.00){$\bar{1} 1$};
        \node[NodeST](13)at(13.00,-2.83){$\bar{1} 1$};
        \node[NodeST](15)at(15.00,-5.67){$\bar{1} 1$};
        \node[NodeST](3)at(3.00,-8.50){$\bar{1} 1$};
        \node[NodeST](5)at(5.00,-5.67){$\bar{1} 1$};
        \node[NodeST](7)at(7.00,-2.83){$\bar{1} 1$};
        \node[NodeST](9)at(9.00,-5.67){$\bar{1} 1$};
        \draw[Edge](0)--(1);
        \draw[Edge](1)--(3);
        \draw[Edge](10)--(9);
        \draw[Edge](12)--(13);
        \draw[Edge](13)--(11);
        \draw[Edge](14)--(15);
        \draw[Edge](15)--(13);
        \draw[Edge](16)--(15);
        \draw[Edge](2)--(1);
        \draw[Edge](3)--(5);
        \draw[Edge](4)--(3);
        \draw[Edge](5)--(7);
        \draw[Edge](6)--(5);
        \draw[Edge](7)--(11);
        \draw[Edge](8)--(9);
        \draw[Edge](9)--(7);
        \node(r)at(11.00,2.25){};
        \draw[Edge](r)--(11);
    \end{tikzpicture}
\end{equation}
respectively for the two previous generating sets.

For any positive integer $m$, let $\OperadMP'_m$ be operad defined through the iterated
Hadamard product
\begin{equation}
    \OperadMP'_m :=
    \underbrace{\OperadP \HadamardProduct
    \cdots \HadamardProduct \OperadP}_{\footnotesize m \mbox{ terms}}.
\end{equation}
Let also $\OperadMP_m$ be the subset of $\OperadMP'_m$ restrained on the $m$-tuples
$\Par{\MPat^{(1)}, \dots, \MPat^{(m)}}$ such that $\Length\Par{\MPat^{(1)}} = \dots =
\Length\Par{\MPat^{(m)}}$.

\begin{Theorem} \label{thm:operad_multi_patterns}
    For any positive integer $m$, $\OperadMP_m$ is an operad.
\end{Theorem}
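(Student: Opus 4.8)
The plan is to realize $\OperadMP_m$ as a suboperad of the iterated Hadamard product $\OperadMP'_m$. Since the Hadamard product of two operads is again an operad, as recalled in Section~\ref{subsec:primer_operads}, an immediate induction on $m$ (with base case $\OperadP$, itself an operad by Proposition~\ref{prop:generating_set_operad_patterns}) shows that $\OperadMP'_m$ is an operad for every $m \geq 1$. It therefore suffices to check that $\OperadMP_m$, viewed as the subset of $\OperadMP'_m$ cut out by the common-length condition, is stable under the partial composition maps and contains the unit: such a subset inherits an operad structure and is hence a suboperad, and in particular an operad.

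The technical heart of the argument is a length-additivity formula for the pattern operad $\OperadP$. First I would observe that the length of a pattern $\Pat = (\DegPat, \RhyPat)$ coincides with the length of its underlying rhythm pattern $\RhyPat$. Then, from the definition of the partial composition of $\OperadRP$ --- where $\RhyPat \circ_i \RhyPat'$ is obtained by substituting $\RhyPat'$ for the $i$-th occurrence of $\Beat$ in $\RhyPat$, thereby deleting one symbol and inserting $\Length(\RhyPat')$ symbols --- I would deduce that
\[
    \Length\Par{\Pat \circ_i \Pat'} = \Length(\Pat) + \Length\Par{\Pat'} - 1
\]
for all patterns $\Pat, \Pat'$ and all $i \in [|\Pat|]$. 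The crucial feature of this formula is that the resulting length depends only on the lengths of $\Pat$ and $\Pat'$, and in no way on their degree content or on the index $i$.

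With this in hand, closure is straightforward. Given $\MPat, \MPat' \in \OperadMP_m$, write $L := \Length(\MPat)$ and $L' := \Length\Par{\MPat'}$ for their common component-lengths. Because partial composition in the Hadamard product acts voice by voice with the same index, the $j$-th component of $\MPat \circ_i \MPat'$ is $\MPat^{(j)} \circ_i {\MPat'}^{(j)}$, whose length is $L + L' - 1$ by the formula above, independently of $j \in [m]$. Hence all $m$ components of $\MPat \circ_i \MPat'$ share the length $L + L' - 1$, so $\MPat \circ_i \MPat' \in \OperadMP_m$. Finally, the unit of $\OperadMP'_m$ is the $m$-tuple all of whose components equal the unit pattern $0$ of $\OperadP$, which has length $1$ in every voice and thus lies in $\OperadMP_m$.

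I do not expect a genuine obstacle here: the only point requiring care is that the length increment under composition is the same across all voices, and this is immediate once the length-additivity formula is established, precisely because that formula is blind to the content of the patterns. The main thing to get right is therefore the bookkeeping of lengths in $\OperadRP$, which I would verify directly from the substitution rule defining $\circ_i$ on rhythm patterns; everything else reduces to the already-established fact that $\OperadMP'_m$ is an operad.
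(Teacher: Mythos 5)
Your proof is correct and follows exactly the route the paper sets up (the paper itself omits the proof for lack of space, but its definition of $\OperadMP_m$ as the equal-length subset of the Hadamard power $\OperadMP'_m$ points precisely to this argument): the length-additivity formula $\Length\Par{\Pat \circ_i \Pat'} = \Length(\Pat) + \Length\Par{\Pat'} - 1$, being independent of the voice and of $i$, is the right key observation for closure, and the unit check is immediate.
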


Since an $m$-multi-pattern is an $m$-tuple $\Par{\MPat^{(1)}, \dots, \MPat^{(m)}}$ where all
$\MPat^{(i)}$ have the same arity and the same length, for any $m \in \N$, $\OperadMP_m$ is
the graded set of all $m$-multi-patterns. By Theorem~\ref{thm:operad_multi_patterns},
$\OperadMP_m$ is an operad, called \Def{$m$-music box operad}.

By using the matrix notation for $m$-multi-patterns, we have for instance respectively in
$\OperadMP_2$ and in $\OperadMP_3$,
\begin{equation}
    \begin{MultiPattern}
        \Rest & \bar{2} & \bar{1} & \Rest & 0 \\
        0 & 1 & \Rest & \Rest & 1
    \end{MultiPattern}
    \circ_2
    \begin{MultiPattern}
        {\bf 1} & \Rest & {\bf 0} & {\bf 0} \\
        {\bf \bar{3}} & \Rest & {\bf 0} & {\bf 4} \\
    \end{MultiPattern}
    =
    \begin{MultiPattern}
        \Rest & \bar{2} & {\bf 0} & \Rest & {\bf \bar{0}} & {\bf \bar{0}} & \Rest & 0 \\
        0 & {\bf \bar{2}} & \Rest & {\bf 1} & {\bf 5} & \Rest & \Rest & 1
    \end{MultiPattern}.
\end{equation}
This definition of the $m$-music box operad $\OperadMP_m$ explains why all the patterns of
an $m$-multi-pattern must have the same arity. This is a consequence of the general
definition of the Hadamard product of operads.

For any sequence $\Par{\alpha_1, \dots, \alpha_m}$ of integers of $\Z$ and $\beta \in \N$,
let
\begin{equation}
    \phi_{\Par{\alpha_1, \dots, \alpha_m}, \beta} : \OperadMP_m \to \OperadMP_m
\end{equation}
be the map such that, for any $\MPat := \Par{\MPat^{(1)}, \dots, \MPat^{(m)}} \in
\OperadMP_m$,
\begin{math}
    \phi_{\Par{\alpha_1, \dots, \alpha_m}, \beta}\Par{\MPat}
\end{math}
is the $m$-multi-pattern obtained by multiplying each degree of $\MPat^{(j)}$ by $\alpha_j$
and by replacing each occurrence of $\Rest$ in $\MPat$ by $\beta$ occurrences of $\Rest$.
For instance,
\begin{equation}
    \phi_{\Par{2, 0, -1}, 2}\Par{
    \begin{MultiPattern}
        1 & \Rest & \Rest & 2 \\
        \Rest & 1 & \Rest & 3 \\
        3 & 1 & \Rest & \Rest
    \end{MultiPattern}}
    =
    \begin{MultiPattern}
        2 & \Rest & \Rest & \Rest & \Rest & 4 \\
        \Rest & \Rest & 0 & \Rest & \Rest & 0 \\
        \bar{3} & \bar{1} & \Rest & \Rest & \Rest & \Rest
    \end{MultiPattern}.
\end{equation}

\begin{Proposition} \label{prop:morphism_operad_multi_patterns}
    For any positive integer $m$, any sequence $\Par{\alpha_1, \dots, \alpha_m}$ of
    integers, and any nonnegative integer $\beta$, the map $\phi_{\Par{\alpha_1, \dots,
    \alpha_m}, \beta}$ is an operad endomorphism of~$\OperadMP_m$.
\end{Proposition}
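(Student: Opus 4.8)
The plan is to factor $\phi := \phi_{\Par{\alpha_1, \dots, \alpha_m}, \beta}$ through the Hadamard product structure of $\OperadMP'_m$ and to reduce the verification to the two elementary factor operads $\OperadDP$ and $\OperadRP$. On a single pattern sitting in the $j$-th row, $\phi$ performs two independent operations: on its degree pattern it applies the map $\psi_{\alpha_j} : \OperadDP \to \OperadDP$ multiplying every degree by $\alpha_j$, and on its rhythm pattern it applies the map $\chi_\beta : \OperadRP \to \OperadRP$ replacing every occurrence of $\Rest$ by $\Rest^\beta$. I would establish that each of these is an operad endomorphism of its respective operad, assemble them componentwise into an endomorphism of $\OperadMP'_m$, and finally restrict to $\OperadMP_m$. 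Three things then have to be checked: that $\phi$ sends $\OperadMP_m$ into itself, that it preserves the unit, and that it commutes with the partial compositions.

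For well-definedness I would first note that neither operation alters the arity: multiplying a degree by $\alpha_j$ keeps a beat a beat (even when $\alpha_j = 0$, the symbol $0$ is still a degree and not a $\Rest$), and expanding rests creates nor destroys no beat. For the length condition defining $\OperadMP_m$, observe that every row of a multi-pattern of arity $n$ and length $\ell$ carries exactly $\ell - n$ rests; after applying $\chi_\beta$ each row therefore acquires length $n + \beta\Par{\ell - n}$, the same value for every row, so $\phi\Par{\MPat}$ again lies in $\OperadMP_m$.

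For the factor maps, $\psi_\alpha$ is nothing but the operad endomorphism of $\OperadDP$ induced by the monoid endomorphism $x \mapsto \alpha x$ of $\Par{\Z, +, 0}$ through the functorial construction $\ConstructionT$ invoked in Proposition~\ref{prop:operad_degree_patterns}; alternatively it follows directly from the identity $\alpha\Par{\DegPat_i + \DegPat'_k} = \alpha \DegPat_i + \alpha \DegPat'_k$ applied to the definition of $\circ_i$ on $\OperadDP$. The map $\chi_\beta$ requires more care. Since $\chi_\beta$ preserves the number and the left-to-right order of beats, the $i$-th beat of $\chi_\beta\Par{\RhyPat}$ corresponds to the $i$-th beat of $\RhyPat$, so substituting into it is compatible with substituting into $\RhyPat$. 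The only subtlety is at the two junctions of the inserted pattern, where a block $\Rest^a$ coming from $\RhyPat$ abuts a block $\Rest^b$ coming from the inserted $\RhyPat'$: I would check that expanding the merged block, $\Rest^{a + b} \mapsto \Rest^{\Par{a + b}\beta}$, yields exactly $\Rest^{a\beta + b\beta}$, which is what one gets by expanding the two blocks separately and then concatenating. This identity is where the argument for $\chi_\beta$ hinges, and it is the step I expect to be the main obstacle to write out carefully.

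With both factor maps shown to be operad endomorphisms, I would assemble $\phi$ as the componentwise map $\Par{\psi_{\alpha_1} \HadamardProduct \chi_\beta} \HadamardProduct \dots \HadamardProduct \Par{\psi_{\alpha_m} \HadamardProduct \chi_\beta}$ on $\OperadMP'_m$: because the partial composition of a Hadamard product acts componentwise, a tuple of morphisms on the factors is automatically a morphism on the product. Restricting this morphism to the suboperad $\OperadMP_m$, which is an operad by Theorem~\ref{thm:operad_multi_patterns} and is closed under $\phi$ by the length computation above, gives the desired endomorphism of $\OperadMP_m$. Finally, the unit of $\OperadMP_m$ is the column whose every entry is the single-beat pattern $0$; since $\alpha_j \cdot 0 = 0$ and this column contains no rest, $\phi$ fixes it, which completes the verification.
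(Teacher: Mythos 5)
Your proof is correct. Note that the paper itself omits the proof of this proposition (as it announces in the introduction), so there is no official argument to compare against; but your route is exactly the one the paper's constructions invite: factor $\phi_{\Par{\alpha_1,\dots,\alpha_m},\beta}$ through the Hadamard decomposition $\OperadMP_m \subseteq \OperadP^{\HadamardProduct m}$ with $\OperadP = \OperadDP \HadamardProduct \OperadRP$, check the degree-side map via the distributivity $\alpha\Par{\DegPat_i + \DegPat'_k} = \alpha\DegPat_i + \alpha\DegPat'_k$ (or, as you observe, via functoriality of the construction $\ConstructionT$ applied to the monoid endomorphism $x \mapsto \alpha x$ of $\Par{\Z,+,0}$, matching the paper's own proof of Proposition~\ref{prop:operad_degree_patterns}), check the rhythm-side map letter by letter, and verify stability of the equal-length condition via the count $n + \beta\Par{\ell - n}$. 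The one step you flag as a potential obstacle, the junction identity $\Rest^{\Par{a+b}\beta} = \Rest^{a\beta}\Rest^{b\beta}$, is in fact immediate because $\chi_\beta$ acts on individual occurrences of $\Rest$ rather than on maximal rest blocks, so there is nothing delicate there. All the remaining checks (arity preservation even when $\alpha_j = 0$ or $\beta = 0$, preservation of the unit $\Par{0,\Beat}^{\times m}$, and the fact that a tuple of factor morphisms is a morphism of the Hadamard product) are carried out correctly.
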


Let also the map $\Mirror : \OperadMP_m \to \OperadMP_m$ be the map such that, for any
$\MPat \in \OperadMP_m$, $\Mirror(\MPat)$ is the $m$-multi-pattern obtained by reading the
$\MPat$ from right to left. For instance,
\begin{equation}
    \Mirror\Par{
    \begin{MultiPattern}
        1 & \Rest & \Rest & 2 \\
        \Rest & 1 & \Rest & 3 \\
        3 & 1 & \Rest & \Rest
    \end{MultiPattern}}
    =
    \begin{MultiPattern}
        2 & \Rest & \Rest & 1 \\
        3 & \Rest & 1 & \Rest \\
        \Rest & \Rest & 1 & 3
    \end{MultiPattern}.
\end{equation}

\begin{Proposition} \label{prop:mirror_operad_multi_patterns}
    For any positive integer $m$, the map $\Mirror$ sending any $m$-multi-pattern to its
    mirror is an operad anti-automorphism of $\OperadMP_m$.
\end{Proposition}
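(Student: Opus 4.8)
The plan is to reduce the statement to the two component operads $\OperadDP$ and $\OperadRP$ out of which $\OperadMP_m$ is built, and then to verify the anti-automorphism property by a direct manipulation of words. First I would observe that $\Mirror$ acts componentwise. In the matrix notation, reading an $m$-multi-pattern from right to left amounts to reversing each of its rows, so $\Mirror(\MPat) = \Par{\Mirror\Par{\MPat^{(1)}}, \dots, \Mirror\Par{\MPat^{(m)}}}$, where on a single pattern $\Pat = (\DegPat, \RhyPat)$ the map $\Mirror$ reverses the underlying word. Since $\DegPat$ is the subword of degrees of $\Pat$ and $\RhyPat$ is obtained by turning each degree into $\Beat$, reversing the word of $\Pat$ reverses both $\DegPat$ and $\RhyPat$ independently. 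Hence, under the identification of $\OperadMP_m$ as a suboperad of an iterated Hadamard product of copies of $\OperadDP$ and $\OperadRP$, the map $\Mirror$ is exactly the componentwise word reversal. Because partial composition in a Hadamard product is performed componentwise and because all factors of a Hadamard product share the same arity, an index shift $i \mapsto |x| + 1 - i$ that is valid on each factor is valid on the product; it therefore suffices to prove that word reversal is an operad antimorphism on each of $\OperadDP$ and $\OperadRP$, and the property will restrict to $\OperadMP_m$ since reversing rows preserves and keeps equal their lengths.

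Next I would treat $\OperadDP$. Writing $\DegPat$ of arity $n$ and $\DegPat'$ of arity $k$, expanding $\DegPat \circ_i \DegPat'$ and then reversing yields the word $\DegPat_n \dots \DegPat_{i+1}\, \Par{\DegPat_i + \DegPat'_k} \dots \Par{\DegPat_i + \DegPat'_1}\, \DegPat_{i-1} \dots \DegPat_1$. On the other side, the $\Par{n+1-i}$-th letter of $\Mirror(\DegPat)$ is $\DegPat_i$, and substituting $\Mirror(\DegPat') = \DegPat'_k \dots \DegPat'_1$ there, with the shift by $\DegPat_i$ prescribed by the partial composition, produces exactly the same word. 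This establishes $\Mirror(\DegPat \circ_i \DegPat') = \Mirror(\DegPat) \circ_{|\DegPat|+1-i} \Mirror(\DegPat')$, which is the required antimorphism identity.

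For $\OperadRP$ I would argue directly at the level of words. Write $\RhyPat = u\, \Beat\, v$, where $u$ carries the first $i-1$ beats, so that $\RhyPat \circ_i \RhyPat' = u\, \RhyPat'\, v$ and thus $\Mirror(\RhyPat \circ_i \RhyPat') = \Mirror(v)\, \Mirror(\RhyPat')\, \Mirror(u)$. Since $\Mirror(\RhyPat) = \Mirror(v)\, \Beat\, \Mirror(u)$ and $\Mirror(v)$ carries $|\RhyPat|-i$ beats, the displayed $\Beat$ is the $\Par{|\RhyPat|+1-i}$-th beat of $\Mirror(\RhyPat)$; replacing it by $\Mirror(\RhyPat')$ gives precisely $\Mirror(v)\, \Mirror(\RhyPat')\, \Mirror(u)$, so $\Mirror$ is an antimorphism on $\OperadRP$ as well. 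Compatibility with the units is immediate, since the unit of $\OperadMP_m$ has a single column and is therefore fixed by reversal.

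Finally, $\Mirror$ is visibly an involution, hence a bijection, so the antimorphism is in fact an anti-automorphism, completing the argument. I expect the only genuine obstacle to be bookkeeping: keeping careful track of how reversal permutes positions, so that the $i$-th input of $\DegPat$ (respectively the $i$-th beat of $\RhyPat$) becomes the $\Par{|\DegPat|+1-i}$-th input (respectively beat) after mirroring, which is exactly the index shift that distinguishes an antimorphism from a morphism.
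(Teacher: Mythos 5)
Your argument is correct: the reduction to the factors of the Hadamard product (using that both components of an element of a Hadamard product have the same arity, so the index shift $i \mapsto |x|+1-i$ passes to the product), the two direct word computations in $\OperadDP$ and $\OperadRP$, the check that reversal preserves the equal-length condition defining $\OperadMP_m$ inside $\OperadMP'_m$, and the involution argument for bijectivity together give a complete proof. The paper omits its proof of this proposition, so there is nothing to compare against, but your route is the natural one suggested by the paper's construction of $\OperadMP_m$ and is presumably what the author had in mind.
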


Due to the $m$-music box operad and more precisely, to the operad structure on
$m$-multi-patterns, we can see any $m$-multi-pattern as an operator. Therefore, we can build
$m$-multi-patterns and then musical sequences by considering some compositions of small
building blocks $m$-multi-patterns. For instance, by considering the small
$2$-multi-patterns
\begin{equation}
    \MPat_1 :=
    \begin{MultiPattern}
        0 & \Rest \\
        \Rest & 0
    \end{MultiPattern},
    \enspace
    \MPat_2 :=
    \begin{MultiPattern}
        1 & 0 & 1 \\
        \bar{7} & 0 & 0
    \end{MultiPattern},
    \enspace
    \MPat_3 :=
    \begin{MultiPattern}
        1 & 2 & \Rest & 3 \\
        \bar{1} & 0 & \Rest & 1
    \end{MultiPattern},
\end{equation}
one can build a new $2$-multi-pattern by composing them as specified by the tree
\begin{equation} \label{equ:example_composition_tree_multi_patterns}
    \begin{tikzpicture}[Centering,xscale=0.35,yscale=0.2]
        \node(0)at(0.00,-2.25){};
        \node(10)at(6.00,-5){};
        \node(3)at(1.00,-7.75){};
        \node(5)at(2.00,-7.75){};
        \node(6)at(3.00,-7.75){};
        \node(7)at(4.00,-5){};
        \node(9)at(5.00,-5){};
        \node[NodeST](1)at(2.00,0.00){$\MPat_2$};
        \node[NodeST](2)at(2.00,-2.75){$\MPat_1$};
        \node[NodeST](4)at(2.00,-5.50){$\MPat_2$};
        \node[NodeST](8)at(5.00,-2.75){$\MPat_3$};
        \draw[Edge](0)--(1);
        \draw[Edge](10)--(8);
        \draw[Edge](2)--(1);
        \draw[Edge](3)--(4);
        \draw[Edge](4)--(2);
        \draw[Edge](5)--(4);
        \draw[Edge](6)--(4);
        \draw[Edge](7)--(8);
        \draw[Edge](8)--(1);
        \draw[Edge](9)--(8);
        \node(r)at(2.00,2){};
        \draw[Edge](r)--(1);
    \end{tikzpicture}.
\end{equation}
This produces the new $2$-multi-pattern
\begin{equation} \label{equ:example_multi_pattern}
    \begin{MultiPattern}
       1 & 1 & 0 & 1 & \Rest & 2 & 3 & \Rest & 3 \\
       \bar{7} & \Rest & \bar{7} & 0 & 0 & \bar{1} & 0 & \Rest & 1
    \end{MultiPattern}.
\end{equation}
Besides, by Proposition~\ref{prop:morphism_operad_multi_patterns}, the image
of~\eqref{equ:example_multi_pattern} through the map, for instance,  $\phi_{\Par{-1, 2}, 3}$
is the same as the $2$-multi-pattern obtained
from~\eqref{equ:example_composition_tree_multi_patterns} by replacing each
$2$-multi-patt\-ern appearing in it by its image by~$\phi_{\Par{-1, 2}, 3}$.

%%%%%%%%%%%%%%%%%%%%%%%%%%%%%%%%%%%%%%%%%%%%%%%%%%%%%%%%%%%%%%%%%%%%%%%%%%%%%%%%%%%%%%%%%%%%
%%%%%%%%%%%%%%%%%%%%%%%%%%%%%%%%%%%%%%%%%%%%%%%%%%%%%%%%%%%%%%%%%%%%%%%%%%%%%%%%%%%%%%%%%%%%
%%%%%%%%%%%%%%%%%%%%%%%%%%%%%%%%%%%%%%%%%%%%%%%%%%%%%%%%%%%%%%%%%%%%%%%%%%%%%%%%%%%%%%%%%%%%
\section{Generation and random generation} \label{sec:random_generation}
We exploit now the music box operad to design three random generation algorithms devoted to
generate new musical phrases from a finite set of multi-patterns. This relies on colored
operads and bud generating systems, a sort of formal grammars introduced in~\cite{Gir19}.

%%%%%%%%%%%%%%%%%%%%%%%%%%%%%%%%%%%%%%%%%%%%%%%%%%%%%%%%%%%%%%%%%%%%%%%%%%%%%%%%%%%%%%%%%%%%
%%%%%%%%%%%%%%%%%%%%%%%%%%%%%%%%%%%%%%%%%%%%%%%%%%%%%%%%%%%%%%%%%%%%%%%%%%%%%%%%%%%%%%%%%%%%
\subsection{Colored operads and bud operads}
We provide here the elementary notions about colored operads~\cite{Yau16}. We also explain
how to build colored operads from an operads.

A \Def{set of colors} is any nonempty finite set
\begin{math}
    \SetColors := \Bra{\Color_1, \dots, \Color_k}
\end{math}
wherein elements are called \Def{colors}.  A \Def{$\SetColors$-colored set} is a set
$\ColoredOperad$ decomposing as a disjoint union
\begin{equation}
    \ColoredOperad
    :=
    \bigsqcup_{\substack{
        a \in \SetColors \\
        u \in \SetColors^*}}
    \ColoredOperad(a, u),
\end{equation}
where $\SetColors^*$ is the set of all finite sequences of elements of $\SetColors$, and the
$\ColoredOperad(a, u)$ are sets. For any $x \in \ColoredOperad$, there is by definition a
unique pair $(a, u) \in \SetColors \times \SetColors^*$ such that $x \in \ColoredOperad(a,
u)$.  The \Def{arity} $|x|$ of $x$ is the length $|u|$ of $u$ as a word, the \Def{output
color} $\Out(x)$ of $x$ is $a$, and for any $i \in [|x|]$, the \Def{$i$-th input color}
$\In_i(x)$ of $x$ is the $i$-th letter $u_i$ of $u$. We also denote, for any $n \in \N$, by
$\ColoredOperad(n)$ the set of all elements of $\ColoredOperad$ of arity $n$. Therefore, a
colored graded set is in particular a graded set.

A \Def{$\SetColors$-colored operad} is a triple $\Par{\ColoredOperad, \circ_i, \Unit}$
such that $\ColoredOperad$ is a $\SetColors$-colored set, $\circ_i$ is a map
\begin{equation}
    \circ_i : \ColoredOperad(a, u) \times \ColoredOperad\Par{u_i, v}
    \to \ColoredOperad\Par{a, u \circ_i v},
    \qquad i \in [|u|],
\end{equation}
called \Def{partial composition} map, where $u \circ_i v$ is the word on $\SetColors$
obtained by replacing the $i$-th letter of $u$ by $v$, and $\Unit$ is a map
\begin{equation}
    \Unit : \SetColors \to \ColoredOperad(a, a),
\end{equation}
called \Def{colored unit} map.  This data has to satisfy
Relations~\eqref{equ:operad_axiom_1} and~\eqref{equ:operad_axiom_2} when their left and
right members are both well-defined, and, for any $x \in \ColoredOperad$, the relation
\begin{equation}
    \Unit(\Out(x)) \circ_1 x = x = x \circ_i \Unit\Par{\In_i(x)},
    \qquad i \in[|x|].
\end{equation}

Intuitively, an element $x$ of a colored operad having $a$ as output color and $u_i$ as
$i$-th input color for any $i \in [|x|]$ can be seen as an abstract operator wherein colors
are assigned to its output and to each of its inputs. Such an operator is depicted as
\begin{equation}
    \begin{tikzpicture}[Centering,xscale=.25,yscale=.33,font=\scriptsize]
        \node[NodeST](x)at(0,0){$x$};
        \node(r)at(0,2){};
        \node(x1)at(-3,-2){};
        \node(xn)at(3,-2){};
        \node[below of=x1,node distance=1mm](ex1){$1$};
        \node[below of=xn,node distance=1mm](exn){$|x|$};
        \draw[Edge](r)edge[]node[EdgeLabel]{$a$}(x);
        \draw[Edge](x)edge[]node[EdgeLabel]{$u_1$}(x1);
        \draw[Edge](x)edge[]node[EdgeLabel]{$u_{|x|}$}(xn);
        \node[below of=x,node distance=8mm]{$\dots$};
    \end{tikzpicture},
\end{equation}
where the colors of the output and inputs are put on the corresponding edges. The partial
composition of two elements $x$ and $y$ in a colored operad expresses pictorially as
\begin{equation}
    \begin{tikzpicture}[Centering,xscale=.3,yscale=.35,font=\scriptsize]
        \node[NodeST](x)at(0,0){$x$};
        \node(r)at(0,1.75){};
        \node(x1)at(-3,-2){};
        \node(xn)at(3,-2){};
        \node(xi)at(0,-2){};
        \node[below of=x1,node distance=1mm](ex1){$1$};
        \node[below of=xn,node distance=1mm](exn){$|x|$};
        \node[below of=xi,node distance=1mm](exi){$i$};
        \draw[Edge](r)edge[]node[EdgeLabel]{$a$}(x);
        \draw[Edge](x)edge[]node[EdgeLabel]{$u_1$}(x1);
        \draw[Edge](x)edge[]node[EdgeLabel]{$u_{|x|}$}(xn);
        \draw[Edge](x)edge[]node[EdgeLabel]{$u_i$}(xi);
        \node[right of=ex1,node distance=5mm]{$\dots$};
        \node[left of=exn,node distance=5mm]{$\dots$};
    \end{tikzpicture}
    \circ_i
    \begin{tikzpicture}[Centering,xscale=.2,yscale=.35,font=\scriptsize]
        \node[NodeST](x)at(0,0){$y$};
        \node(r)at(0,2){};
        \node(x1)at(-3,-2){};
        \node(xn)at(3,-2){};
        \node[below of=x1,node distance=1mm](ex1){$1$};
        \node[below of=xn,node distance=1mm](exn){$|y|$};
        \draw[Edge](r)edge[]node[EdgeLabel]{$u_i$}(x);
        \draw[Edge](x)edge[]node[EdgeLabel]{$v_1$}(x1);
        \draw[Edge](x)edge[]node[EdgeLabel]{$v_{|y|}$}(xn);
        \node[below of=x,node distance=8mm]{$\dots$};
    \end{tikzpicture}
    =
    \begin{tikzpicture}[Centering,xscale=.41,yscale=.45,font=\scriptsize]
        \node[NodeST](x)at(0,0){$x$};
        \node(r)at(0,1.5){};
        \node(x1)at(-3,-2){};
        \node(xn)at(3,-2){};
        \node[below of=x1,node distance=1mm](ex1){$1$};
        \node[below of=xn,node distance=1mm](exn){$|x| \! + \! |y| \! - \! 1$};
        \node[right of=ex1,node distance=8mm]{$\dots$};
        \node[left of=exn,node distance=9mm]{$\dots$};
        \draw[Edge](r)edge[]node[EdgeLabel]{$a$}(x);
        \draw[Edge](x)edge[]node[EdgeLabel]{$u_1$}(x1);
        \draw[Edge](x)edge[]node[EdgeLabel]{$u_{|x|}$}(xn);
        \node[NodeST](y)at(0,-2.5){$y$};
        \node(y1)at(-1.6,-4.5){};
        \node(yn)at(1.6,-4.5){};
        \node[below of=y1,node distance=1mm](ey1){$i$};
        \node[below of=yn,node distance=1mm](eyn){\quad $i \! + \! |y| \! - \! 1$};
        \draw[Edge](y)edge[]node[EdgeLabel]{$v_1$}(y1);
        \draw[Edge](y)edge[]node[EdgeLabel]{$v_{|y|}$}(yn);
        \node[below of=y,node distance=10mm]{$\dots$};
        \draw[Edge](x)edge[]node[EdgeLabel]{$u_i$}(y);
    \end{tikzpicture}.
\end{equation}
Besides, most of the definitions about operads recalled in
Section~\ref{subsec:primer_operads} generalize straightforwardly to
colored operads. In particular, one can consider the full composition map of a colored
operad defined by~\eqref{equ:full_composition_maps} when its right member is well-defined.

Let us introduce another operation, specific to colored operads. Let $\Par{\ColoredOperad,
\circ_i, \Unit}$ be a colored operad. The \Def{colored composition} map of
$\ColoredOperad$ is the map
\begin{equation}
    \ColoredComposition :
    \ColoredOperad(a, u) \times \ColoredOperad(b, v) \to \ColoredOperad,
    \quad
    a, b \in \SetColors,
    \enspace
    u, v \in \SetColors^*,
\end{equation}
defined, for any $x \in \ColoredOperad(a, u)$ and $y \in \ColoredOperad(b, v)$, by using the
full composition map, by
\begin{equation}
    x \ColoredComposition y := x \circ \Han{y^{(1)}, \dots, y^{(|x|)}},
\end{equation}
where for any $i \in [|x|]$,
\begin{equation}
    y^{(i)} :=
    \begin{cases}
        y & \mbox{if } \In_i(x) = \Out(y), \\
        \Unit\Par{\In_i(x)} & \mbox{otherwise}.
    \end{cases}
\end{equation}
Intuitively, $x \ColoredComposition y$ is obtained by grafting simultaneously the outputs of
copies of $y$ into all the inputs of $x$ having the same color as the output color of~$y$.

Let us describe a general construction building a colored operad from a noncolored one
introduced in~\cite{Gir19}.  Given a noncolored operad $\Par{\Operad, \circ_i, \Unit}$ and a
set of colors $\SetColors$, the \Def{$\SetColors$-bud operad} of $\Operad$ is the
$\SetColors$-colored operad $\BudOperad_\SetColors(\Operad)$ defined in the following way.
First, $\BudOperad_\SetColors(\Operad)$ is the $\SetColors$-colored set defined, for any $a
\in \SetColors$ and $u \in \SetColors^*$, by
\begin{equation}
    \BudOperad_\SetColors(\Operad)(a, u) := \Bra{(a, x, u) : x \in \Operad(|u|)}.
\end{equation}
Second, the partial composition maps $\circ_i$ of $\BudOperad_\SetColors(\Operad)$ are
defined, for any $(a, x, u), \Par{u_i, y, v} \in \BudOperad_\SetColors(\Operad)$ and $i \in
[|u|]$, by
\begin{equation} \label{equ:partial_composition_map_bud_operad}
    (a, x, u) \circ_i \Par{u_i, y, v} := \Par{a, x \circ_i y, u \circ_i v}
\end{equation}
where the first occurrence of $\circ_i$ in the right member
of~\eqref{equ:partial_composition_map_bud_operad} is the partial composition map of $\Operad$ and the
second one is a substitution of words: $u \circ_i v$ is the word obtained by replacing in
$u$ the $i$-th letter of $u$ by $v$.  Finally, the colored unit map $\Unit$ of
$\BudOperad_\SetColors(\Operad)$ is defined by $\Unit(a) := \Par{a, \Unit, a}$ for any $a
\in \SetColors$, where $\Unit$ is the unit of $\Operad$. The \Def{pruning} $\Prune((a, x, u))$ of an element $(a, x, u)$ of
$\BudOperad_\SetColors(\Operad)$ is the element $x$ of~$\Operad$.

Intuitively, this construction consists in forming a colored operad
$\BudOperad_\SetColors(\Operad)$ out of $\Operad$ by surrounding its elements with an output
color and input colors coming from $\SetColors$ in all possible ways.

We apply this construction to the $m$-music box operad by setting, for any set $\SetColors$
of colors,
\begin{equation}
    \BudOperadMP_m^\SetColors := \BudOperad_\SetColors\Par{\OperadMP_m}.
\end{equation}
We call $\BudOperadMP_m^\SetColors$ the \Def{$\SetColors$-bud $m$-music box operad}. The
elements of $\BudOperadMP_m^\SetColors$ are called \Def{$\SetColors$-colored
$m$-multi-patterns}. For instance, for $\SetColors := \Bra{\Color_1, \Color_2, \Color_3}$,
\begin{equation}
    \Par{\Color_1,
    \begin{MultiPattern}
        1 & \Rest & 0 & \Rest & 1 \\
        \bar{7} & \Rest & 0 & 0 & \Rest
    \end{MultiPattern},
    \Color_2 \Color_2 \Color_1}
\end{equation}
is a $\SetColors$-colored $2$-multi-pattern. Moreover, in the colored operad
$\BudOperadMP_2^\SetColors$, one has
\begin{multline}
    \Par{\Color_3,
    \begin{MultiPattern}
        0 & 1 & \Rest \\
        \bar{1} & \Rest & 0
    \end{MultiPattern},
    \Color_2 \Color_1}
    \circ_2
    \Par{\Color_1,
    \begin{MultiPattern}
        {\bf 1} & {\bf 1} & {\bf 2} \\
        {\bf 2} & {\bf \bar{1}} & {\bf \bar{2}}
    \end{MultiPattern},
    {\bf \Color_3 \Color_3 \Color_2}}
    \\
    =
    \Par{\Color_3,
    \begin{MultiPattern}
        0 & {\bf 2} & {\bf 2} & {\bf 3} & \Rest \\
        \bar{1} & \Rest & {\bf 2} & {\bf \bar{1}} & {\bf \bar{2}}
    \end{MultiPattern},
    \Color_2 {\bf \Color_3 \Color_3 \Color_2}}.
\end{multline}

The intuition that justifies the introduction of these colored versions of patterns and of
the $m$-music box operad is that colors restrict the right to perform the composition of two
given patterns. In this way, one can for instance forbid some intervals in the musical
phrases specified by the patterns of a suboperad of $\BudOperadMP_m^\SetColors$ generated by
a given set of $\SetColors$-colored $m$-multi-patterns.  Moreover, given a set
$\GeneratingSet$ of $\SetColors$-colored $m$-multi-patterns, the elements of the suboperad
${\BudOperadMP_m^\SetColors}^\GeneratingSet$ of $\BudOperadMP_m^\SetColors$ generated by
$\GeneratingSet$ are obtained by composing elements of $\GeneratingSet$. Therefore, in some
sense, these elements inherit from properties of the patterns~$\GeneratingSet$.

The next section uses these ideas to propose random generation algorithms outputting new
patterns from existing ones in a controlled way.

%%%%%%%%%%%%%%%%%%%%%%%%%%%%%%%%%%%%%%%%%%%%%%%%%%%%%%%%%%%%%%%%%%%%%%%%%%%%%%%%%%%%%%%%%%%%
%%%%%%%%%%%%%%%%%%%%%%%%%%%%%%%%%%%%%%%%%%%%%%%%%%%%%%%%%%%%%%%%%%%%%%%%%%%%%%%%%%%%%%%%%%%%
\subsection{Bud generating systems and random generation}
\label{subsec:bud_generating_systems}
We describe here a sort of generating systems using operads introduced in~\cite{Gir19}.
Slight variations are considered in this present work. We also design three random
generation algorithms to produce musical phrases.

A \Def{bud generating system}~\cite{Gir19} is a tuple $\Par{\Operad, \SetColors, \SetRules,
\Color}$ where
\begin{enumerate}[label={\em (\roman*)}]
    \item $\Par{\Operad, \circ_i, \Unit}$ is a noncolored operad, called \Def{ground operad};
    \item $\SetColors$ is a finite set of colors;
    \item $\SetRules$ is a finite subset of $\BudOperad_\SetColors(\Operad)$, called
    \Def{set of rules};
    \item $\Color$ is a color of $\SetColors$, called \Def{initial color}.
\end{enumerate}
For any color $a \in \SetColors$, we shall denote by $\SetRules_a$ the set of all rules of
$\SetRules$ having $a$ as output color.

Bud generating systems are devices similar to context-free formal grammars~\cite{HMU06}
wherein colors play the role of nonterminal symbols. These last devices are designed to
generate sets of words. Bud generating systems are designed to generate more general
combinatorial objects (here, $m$-multi-patterns). More precisely, a bud generating system
$\Par{\Operad, \SetColors, \SetRules, \Color}$ allows us to build elements of $\Operad$ by
following three different operating modes. We describe in the next sections the three
corresponding random generation algorithms. These algorithms are in particular intended to
work with $\OperadMP_m$ as ground operad in order to generate $m$-multi-patterns.

Hereafter, we shall provide some examples based upon the bud generating system
\begin{equation} \label{equ:example_bud_generating_system}
    \BudSystem :=
    \Par{\OperadMP_2, \Bra{\Color_1, \Color_2, \Color_3},
    \Bra{\CPat_1, \CPat_2, \CPat_3, \CPat_4, \CPat_5}, \Color_1}
\end{equation}
where
\begin{equation}
    \CPat_1 :=
    \Par{\Color_1,
    \begin{MultiPattern}
        0 & 2 & \Rest & 1 & \Rest & 0 & 4 \\
        \bar{5} & \Rest & \Rest & 0 & 0 & 0 & 0
    \end{MultiPattern},
    \Color_3 \Color_2 \Color_1 \Color_1 \Color_3},
\end{equation}
\begin{equation}
    \CPat_2 := \Par{\Color_1,
    \begin{MultiPattern}
        1 & \Rest & 0 \\
        0 & \Rest & 1
    \end{MultiPattern},
    \Color_1 \Color_1},
    \quad
    \CPat_3 := \Par{\Color_2,
    \begin{MultiPattern}
        \bar{1} \\
        \bar{1}
    \end{MultiPattern},
    \Color_1},
\end{equation}
\begin{equation}
    \CPat_4 := \Par{\Color_2,
    \begin{MultiPattern}
        0 & 0 \\
        0 & 0
    \end{MultiPattern},
    \Color_1 \Color_1},
    \quad
    \CPat_5 := \Par{\Color_3,
    \begin{MultiPattern}
        0 \\
        0
    \end{MultiPattern},
    \Color_3}.
\end{equation}
Moreover, to interpret the generated multi-patterns, we choose to consider a tempo of $128$
and the rooted scale $\Par{\Scale, \Note{9}{3}}$ where $\Scale$ is the Hirajoshi scale.

Let $\BudSystem := (\Operad, \SetColors, \SetRules, \Color)$ be a bud generating system.
Let $\PartialDerivation$ be the binary relation on $\BudOperad_\SetColors(\Operad)$ such
that
\begin{equation}
    (a, x, u) \PartialDerivation (a, y, v)
\end{equation}
if there is a rule $r\in \SetRules$ and $i \in [|u|]$ such that
\begin{equation}
    (a, y, v) = (a, x, u) \circ_i r.
\end{equation}
An element $x$ of $\Operad$ is \Def{partially generated} by $\BudSystem$ if there is an
element $(\Color, x, u)$ such that $\Par{\Color, \Unit, \Color}$ is in relation with
$(\Color, x, u)$ w.r.t. the reflexive and transitive closure of~$\PartialDerivation$.

For instance, by considering the bud generating
system~\eqref{equ:example_bud_generating_system}, since
\begin{multline}
    \Par{\Color_1,
    \begin{MultiPattern}
        0 \\
        0
    \end{MultiPattern},
    \Color_1}
    \PartialDerivation
    \Par{\Color_1,
    \begin{MultiPattern}
        1 & \Rest & 0 \\
        0 & \Rest & 1
    \end{MultiPattern},
    \Color_1 \Color_1}
    \\
    \PartialDerivation
    \Par{\Color_1,
    \begin{MultiPattern}
        1 & \Rest & 0 & 2 & \Rest & 1 & \Rest & 0 & 4 \\
        0 & \Rest & \bar{4} & \Rest & \Rest & 1 & 1 & 1 & 1
    \end{MultiPattern},
    \Color_1 \Color_3 \Color_2 \Color_1 \Color_1 \Color_3}
    \\
    \PartialDerivation
    \Par{\Color_1,
    \begin{MultiPattern}
        1 & \Rest & 0 & 2 & 2 & \Rest & 1 & \Rest & 0 & 4 \\
        0 & \Rest & \bar{4} & \Rest & \Rest & 1 & 1 & 1 & 1 & 1
    \end{MultiPattern},
    \Color_1 \Color_3 \Color_1 \Color_1 \Color_1 \Color_1 \Color_3},
\end{multline}
the $2$-multi-pattern
\begin{equation}
    \begin{MultiPattern}
        1 & \Rest & 0 & 2 & 2 & \Rest & 1 & \Rest & 0 & 4 \\
        0 & \Rest & \bar{4} & \Rest & \Rest & 1 & 1 & 1 & 1 & 1
    \end{MultiPattern}
\end{equation}
is partially generated by~$\BudSystem$.

The \Def{partial random generation algorithm} is the algorithm defined as follows:
\begin{itemize}
    \item Inputs:
    \begin{enumerate}
        \item A bud generating system
        $\BudSystem := (\Operad, \SetColors, \SetRules, \Color)$;
        \item An integer $k \geq 0$.
    \end{enumerate}
    \item Output: an element of $\Operad$.
\end{itemize}
\begin{enumerate}
    \item Set $x$ as the element $(\Color, \Unit, \Color)$;
    \item Repeat $k$ times:
    \begin{enumerate}
        \item Pick a position $i \in [|x|]$ at random;
        \item If $\SetRules_{\In_i(x)} \ne \emptyset$:
        \begin{enumerate}
            \item Pick a rule $r \in \SetRules_{\In_i(x)}$ at random;
            \item Set $x := x \circ_i r$;
        \end{enumerate}
    \end{enumerate}
    \item Returns $\Prune(x)$.
\end{enumerate}
This algorithm returns an element partially generated by $\BudSystem$ obtained by applying
at most $k$ rules to the initial element $(\Color, \Unit, \Color)$. The execution of the
algorithm builds a composition tree of elements of $\SetRules$ with at most $k$ internal
nodes.

For instance, by considering the bud generating
system~\eqref{equ:example_bud_generating_system}, this algorithm called with $k := 5$ builds
the tree of colored $2$-multipatterns
\begin{equation}
    \begin{tikzpicture}[Centering,xscale=0.22,yscale=0.14]
        \node(0)at(-1.00,-3.75){};
        \node(10)at(7.00,-11.25){};
        \node(11)at(8.00,-11.25){};
        \node(12)at(8.00,-3.75){};
        \node(14)at(11.00,-7.50){};
        \node(2)at(2.00,-7.50){};
        \node(4)at(4.00,-7.50){};
        \node(6)at(4.00,-11.25){};
        \node(7)at(5.00,-11.25){};
        \node(9)at(6.00,-11.25){};
        \node[NodeST](1)at(2.00,-3.75){$\CPat_3$};
        \node[NodeST](13)at(11.00,-3.75){$\CPat_5$};
        \node[NodeST](3)at(5.00,0.00){$\CPat_1$};
        \node[NodeST](5)at(5.00,-3.75){$\CPat_2$};
        \node[NodeST](8)at(6.00,-7.50){$\CPat_1$};
        \draw[Edge](0)--(3);
        \draw[Edge](1)--(3);
        \draw[Edge](10)--(8);
        \draw[Edge](11)--(8);
        \draw[Edge](12)--(3);
        \draw[Edge](13)--(3);
        \draw[Edge](14)--(13);
        \draw[Edge](2)--(1);
        \draw[Edge](4)--(5);
        \draw[Edge](5)--(3);
        \draw[Edge](6)--(8);
        \draw[Edge](7)--(8);
        \draw[Edge](8)--(5);
        \draw[Edge](9)--(8);
        \node(r)at(5.00,3){};
        \draw[Edge](r)--(3);
    \end{tikzpicture}
\end{equation}
which produces the $2$-multi-pattern
\begin{equation}
    \begin{MultiPattern}
        0 & 1 & \Rest & 2 & \Rest & 1 & 3 & \Rest & 2 & \Rest & 1 & 5 & \Rest & 0 & 4 \\
        \bar{5} & \Rest & \Rest & \bar{1} & 0 & \Rest & \bar{4} & \Rest & \Rest & 1 & 1 & 1
            & 1 & 0 & 0
    \end{MultiPattern}.
\end{equation}
Together with the aforementioned interpretation, the generated musical phrase is
\begin{abc}[name=PhraseExample2,width=.9\abcwidth]
X:1
T:
K:Am
M:8/8
L:1/8
Q:1/8=128
V:voice1
A,1 B,2 C2 B,1 E2 C2 B,1 A2 A,1 F1
V:voice2
A,,3 F,1 A,2 B,,3 B,1 B,1 B,1 B,1 A,1 A,1
\end{abc}

Let $\FullDerivation$ be the binary relation on $\BudOperad_\SetColors(\Operad)$ such that
\begin{equation}
    (a, x, u) \FullDerivation (a, y, v)
\end{equation}
if there are rules $r_1, \dots, r_{|x|} \in \SetRules$ such that
\begin{equation}
    (a, y, v)
    =
    (a, x, u) \circ \Han{r_1, \dots, r_{|x|}}.
\end{equation}
An element $x$ of $\Operad$ is \Def{fully generated} by $\BudSystem$ if there is an element
$(\Color, x, u)$ such that $\Par{\Color, \Unit, \Color}$ is in relation with $(\Color, x,
u)$ w.r.t. the reflexive and transitive closure of~$\FullDerivation$.

For instance, by considering the bud generating
system~\eqref{equ:example_bud_generating_system}, since
\begin{multline}
    \Par{\Color_1,
    \begin{MultiPattern}
        0 \\
        0
    \end{MultiPattern},
    \Color_1}
    \\
    \FullDerivation
    \Par{\Color_1,
    \begin{MultiPattern}
        0 & 2 & \Rest & 1 & \Rest & 0 & 4 \\
        \bar{5} & \Rest & \Rest & 0 & 0 & 0 & 0
    \end{MultiPattern},
    \Color_3 \Color_2 \Color_1 \Color_1 \Color_3}
    \\
    \FullDerivation
    \Par{\Color_1,
    \begin{MultiPattern}
        0 & 1 & \Rest & 2 & \Rest & 1 & \Rest & 0 & 2 & \Rest & 1 & \Rest & 0 & 4 & 4 \\
        \bar{5} & \Rest & \Rest & \bar{1} & 0 & \Rest & 1 & \bar{5} & \Rest & \Rest & 0 & 0
            & 0 & 0 & 0
    \end{MultiPattern},
     \right.
    \\
    \left.
    \Color_3 \Color_1 \Color_1 \Color_1 \Color_3 \Color_2 \Color_1 \Color_1 \Color_3
    \Color_3},
\end{multline}
the $2$-multi-pattern
\begin{equation}
    \begin{MultiPattern}
        0 & 1 & \Rest & 2 & \Rest & 1 & \Rest & 0 & 2 & \Rest & 1 & \Rest & 0 & 4 & 4 \\
        \bar{5} & \Rest & \Rest & \bar{1} & 0 & \Rest & 1 & \bar{5} & \Rest & \Rest & 0 & 0
            & 0 & 0 & 0
    \end{MultiPattern}
\end{equation}
is fully generated by $\BudSystem$.

The \Def{full random generation algorithm} is the algorithm defined as follows:
\begin{itemize}
    \item Inputs:
    \begin{enumerate}
        \item A bud generating system
        $\BudSystem := (\Operad, \SetColors, \SetRules, \Color)$;
        \item An integer $k \geq 0$.
    \end{enumerate}
    \item Output: an element of $\Operad$.
\end{itemize}
\begin{enumerate}
    \item Set $x$ as the element $(\Color, \Unit, \Color)$;
    \item Repeat $k$ times:
    \begin{enumerate}
        \item If all $\SetRules_{\In_i(x)}$, $i \in [|x|]$, are nonempty:
        \begin{enumerate}
            \item Let $\Par{r_1, \dots, r_{|x|}}$ be a tuple of rules such that each $r_i$
            is picked at random in
            $\SetRules_{\In_i(x)}$;
            \item Set $x := x \circ \Han{r_1, \dots, r_{|x|}}$;
        \end{enumerate} 
    \end{enumerate}
    \item Returns $\Prune(x)$.
\end{enumerate}
This algorithm returns an element synchronously generated by $\BudSystem$ obtained by
applying at most $k$ rules to the initial element $(\Color, \Unit, \Color)$. The execution
of the algorithm builds a composition tree of elements of $\SetRules$ of height at most~$k +
1$ wherein the leaves are all at the same distance from the root.

For instance, by considering the bud generating
system~\eqref{equ:example_bud_generating_system}, this algorithm called with $k := 2$ builds
the tree of colored $2$-multipatterns
\begin{equation}
    \begin{tikzpicture}[Centering,xscale=0.19,yscale=0.085]
        \node(0)at(0.00,-18.00){};
        \node(1)at(1.00,-18.00){};
        \node(12)at(9.00,-18.00){};
        \node(14)at(11.00,-18.00){};
        \node(16)at(12.50,-18.00){};
        \node(18)at(13.50,-18.00){};
        \node(19)at(14.50,-18.00){};
        \node(21)at(15.50,-18.00){};
        \node(23)at(17.00,-18.00){};
        \node(3)at(2.00,-18.00){};
        \node(4)at(3.00,-18.00){};
        \node(5)at(4.00,-18.00){};
        \node(7)at(5.50,-18.00){};
        \node(9)at(6.50,-18.00){};
        \node[NodeST](10)at(9.00,0.00){$\CPat_2$};
        \node[NodeST](11)at(9.00,-12.00){$\CPat_5$};
        \node[NodeST](13)at(11.00,-12.00){$\CPat_3$};
        \node[NodeST](15)at(13.00,-6.00){$\CPat_1$};
        \node[NodeST](17)at(13.00,-12.00){$\CPat_2$};
        \node[NodeST](2)at(2.00,-12.00){$\CPat_1$};
        \node[NodeST](20)at(15.00,-12.00){$\CPat_2$};
        \node[NodeST](22)at(17.00,-12.00){$\CPat_5$};
        \node[NodeST](6)at(4.00,-6.00){$\CPat_2$};
        \node[NodeST](8)at(6.00,-12.00){$\CPat_2$};
        \draw[Edge](0)--(2);
        \draw[Edge](1)--(2);
        \draw[Edge](11)--(15);
        \draw[Edge](12)--(11);
        \draw[Edge](13)--(15);
        \draw[Edge](14)--(13);
        \draw[Edge](15)--(10);
        \draw[Edge](16)--(17);
        \draw[Edge](17)--(15);
        \draw[Edge](18)--(17);
        \draw[Edge](19)--(20);
        \draw[Edge](2)--(6);
        \draw[Edge](20)--(15);
        \draw[Edge](21)--(20);
        \draw[Edge](22)--(15);
        \draw[Edge](23)--(22);
        \draw[Edge](3)--(2);
        \draw[Edge](4)--(2);
        \draw[Edge](5)--(2);
        \draw[Edge](6)--(10);
        \draw[Edge](7)--(8);
        \draw[Edge](8)--(6);
        \draw[Edge](9)--(8);
        \node(r)at(9.00,5.50){};
        \draw[Edge](r)--(10);
    \end{tikzpicture}
\end{equation}
which produces the $2$-multi-pattern
\begin{small}
\begin{equation}
    \begin{MultiPattern}
        2 & 4 & \Rest & 3 & \Rest & 2 & 6 & \Rest & 2 & \Rest & 1 & \Rest & 0 & 1 & \Rest
            & 2 & \Rest & 1 & \Rest & 1 & \Rest & 0 & 4 \\
        \bar{5} & \Rest & \Rest & 0 & 0 & 0 & 0 & \Rest & 1 & \Rest & 2 & \Rest & \bar{4}
            & \Rest & \Rest & 0 & 1 & \Rest & 2 & 1 & \Rest & 2 & 1
    \end{MultiPattern}.
\end{equation}
\end{small}
Together with the aforementioned interpretation, the generated musical phrase is
\begin{abc}[name=PhraseExample3,width=1.0\abcwidth]
X:1
T:
K:Am
M:8/8
L:1/8
Q:1/8=128
V:voice1
C1 F2 E2 C1 B2 C2 B,2 A,1 B,2 C2 B,2 B,2 A,1 F1
V:voice2
A,,3 A,1 A,1 A,1 A,2 B,2 C2 B,,3 A,1 B,2 C1 B,2 C1 B,1
\end{abc}

Let $\ColoredDerivation$ be the binary relation on $\BudOperad_\SetColors(\Operad)$ such
that
\begin{equation}
    (a, x, u) \ColoredDerivation (a, y, v)
\end{equation}
if there is a rule $r \in \SetRules$ such that
\begin{equation}
    (a, y, v) = (a, x, u) \ColoredComposition r.
\end{equation}
An element $x$ of $\Operad$ is \Def{colorfully generated} by $\BudSystem$ if there is an
element $(\Color, x, u)$ such that $\Par{\Color, \Unit, \Color}$ is in relation with
$(\Color, x, u)$ w.r.t. the reflexive and transitive closure of~$\ColoredDerivation$.

For instance, by considering the bud generating
system~\eqref{equ:example_bud_generating_system}, since
\begin{multline}
    \Par{\Color_1,
    \begin{MultiPattern}
        0 \\
        0
    \end{MultiPattern},
    \Color_1}
    \ColoredDerivation
    \Par{\Color_1,
    \begin{MultiPattern}
        0 & 2 & \Rest & 1 & \Rest & 0 & 4 \\
        \bar{5} & \Rest & \Rest & 0 & 0 & 0 & 0
    \end{MultiPattern},
    \Color_3 \Color_2 \Color_1 \Color_1 \Color_3}
    \\
    \ColoredDerivation
    \Par{\Color_1,
    \begin{MultiPattern}
        0 & 2 & \Rest & 2 & \Rest & 1 & \Rest & 1 & \Rest & 0 & 4 \\
        \bar{5} & \Rest & \Rest & 0 & 0 & \Rest & 1 & 0 & \Rest & 1 & 0
    \end{MultiPattern},
    \Color_3 \Color_2 \Color_1 \Color_1 \Color_1 \Color_1 \Color_3}
    \\
    \ColoredDerivation
    \Par{\Color_1,
    \begin{MultiPattern}
        0 & 1 & \Rest & 2 & \Rest & 1 & \Rest & 1 & \Rest & 0 & 4 \\
        \bar{5} & \Rest & \Rest & \bar{1} & 0 & \Rest & 1 & 0 & \Rest & 1 & 0
    \end{MultiPattern},
    \Color_3 \Color_1 \Color_1 \Color_1 \Color_1 \Color_1 \Color_3},
\end{multline}
the $2$-multi-pattern
\begin{equation}
    \begin{MultiPattern}
        0 & 1 & \Rest & 2 & \Rest & 1 & \Rest & 1 & \Rest & 0 & 4 \\
        \bar{5} & \Rest & \Rest & \bar{1} & 0 & \Rest & 1 & 0 & \Rest & 1 & 0
    \end{MultiPattern}
\end{equation}
is colorfully generated by $\BudSystem$.

The \Def{colored random generation algorithm} is the algorithm defined as follows:
\begin{itemize}
    \item Inputs:
    \begin{enumerate}
        \item A bud generating system
        $\BudSystem := (\Operad, \SetColors, \SetRules, \Color)$;
        \item An integer $k \geq 0$.
    \end{enumerate}
    \item Output: an element of $\Operad$.
\end{itemize}
\begin{enumerate}
    \item Set $x$ as the element $(\Color, \Unit, \Color)$;
    \item Repeat $k$ times:
    \begin{enumerate}
        \item Pick a rule $r \in \SetRules$ at random;
        \item Set $x := x \ColoredComposition r$;
    \end{enumerate}
    \item Returns $\Prune(x)$.
\end{enumerate}
This algorithm returns an element colorfully generated by $\BudSystem$ obtained by applying
at most $k$ rules to the initial element $(\Color, \Unit, \Color)$. The execution of the
algorithm builds a composition tree of elements of height at most~$k + 1$.

For instance, by considering the bud generating
system~\eqref{equ:example_bud_generating_system}, this algorithm called with $k := 3$ builds
the tree of colored $2$-multipatterns
\begin{equation}
    \begin{tikzpicture}[Centering,xscale=0.35,yscale=0.085]
        \node(1)at(0.00,-12.50){};
        \node(10)at(4.00,-12.50){};
        \node(11)at(4.50,-12.50){};
        \node(13)at(5.50,-18.75){};
        \node(15)at(6.50,-18.75){};
        \node(17)at(7.50,-18.75){};
        \node(19)at(8.00,-12.50){};
        \node(20)at(8.50,-12.50){};
        \node(22)at(9.50,-18.75){};
        \node(24)at(10.50,-12.50){};
        \node(3)at(1.50,-12.50){};
        \node(6)at(2.50,-18.75){};
        \node(8)at(3.50,-18.75){};
        \node[NodeST](0)at(0.00,-6.25){$\CPat_5$};
        \node[NodeST](12)at(5.50,-12.50){$\CPat_5$};
        \node[NodeST](14)at(6.50,-12.50){$\CPat_5$};
        \node[NodeST](16)at(7.50,-12.50){$\CPat_3$};
        \node[NodeST](18)at(8.00,-6.25){$\CPat_1$};
        \node[NodeST](2)at(1.50,-6.25){$\CPat_3$};
        \node[NodeST](21)at(9.50,-12.50){$\CPat_5$};
        \node[NodeST](23)at(10.50,-6.25){$\CPat_5$};
        \node[NodeST](4)at(4.00,0.00){$\CPat_1$};
        \node[NodeST](5)at(2.50,-12.50){$\CPat_5$};
        \node[NodeST](7)at(3.50,-12.50){$\CPat_3$};
        \node[NodeST](9)at(4.00,-6.25){$\CPat_1$};
        \draw[Edge](0)--(4);
        \draw[Edge](1)--(0);
        \draw[Edge](10)--(9);
        \draw[Edge](11)--(9);
        \draw[Edge](12)--(9);
        \draw[Edge](13)--(12);
        \draw[Edge](14)--(18);
        \draw[Edge](15)--(14);
        \draw[Edge](16)--(18);
        \draw[Edge](17)--(16);
        \draw[Edge](18)--(4);
        \draw[Edge](19)--(18);
        \draw[Edge](2)--(4);
        \draw[Edge](20)--(18);
        \draw[Edge](21)--(18);
        \draw[Edge](22)--(21);
        \draw[Edge](23)--(4);
        \draw[Edge](24)--(23);
        \draw[Edge](3)--(2);
        \draw[Edge](5)--(9);
        \draw[Edge](6)--(5);
        \draw[Edge](7)--(9);
        \draw[Edge](8)--(7);
        \draw[Edge](9)--(4);
        \node(r)at(4.00,4.69){};
        \draw[Edge](r)--(4);
    \end{tikzpicture}
\end{equation}
which produces the $2$-multi-pattern
\begin{equation}
    \begin{MultiPattern}
        0 & 1 & \Rest & 1 & 2 & \Rest & 2 & \Rest & 1 & 5 & \Rest & 0 & 1 & \Rest & 1
            & \Rest & 0 & 4 & 4  \\
        \bar{5} & \Rest & \Rest & \bar{1} & \bar{5} & \Rest & \Rest & \bar{1} & 0 & 0 & 0
            & \bar{5} & \Rest & \Rest & \bar{1} & 0 & 0 & 0 & 0
    \end{MultiPattern}.
\end{equation}
Together with the aforementioned interpretation, the generated musical phrase is
\begin{abc}[name=PhraseExample4,width=1.0\abcwidth]
X:1
T:
K:Am
M:8/8
L:1/8
Q:1/8=128
V:voice1
A,1 B,2 B,1 C2 C2 B,1 A2 A,1 B,2 B,2 A,1 F1 F1
V:voice2
A,,3 F,1 A,,3 F,1 A,1 A,1 A,1 A,,3 F,1 A,1 A,1 A,1 A,1
\end{abc}

%%%%%%%%%%%%%%%%%%%%%%%%%%%%%%%%%%%%%%%%%%%%%%%%%%%%%%%%%%%%%%%%%%%%%%%%%%%%%%%%%%%%%%%%%%%%
%%%%%%%%%%%%%%%%%%%%%%%%%%%%%%%%%%%%%%%%%%%%%%%%%%%%%%%%%%%%%%%%%%%%%%%%%%%%%%%%%%%%%%%%%%%%
%%%%%%%%%%%%%%%%%%%%%%%%%%%%%%%%%%%%%%%%%%%%%%%%%%%%%%%%%%%%%%%%%%%%%%%%%%%%%%%%%%%%%%%%%%%%
\section{Applications: exploring variations of patterns} \label{sec:applications}
We construct here some particular bud generated systems devoted to work with the algorithms
introduced in Section~\ref{subsec:bud_generating_systems}. They generate variations of a
single $1$-multi-pattern $\Pat$ given at input, with possibly some auxiliary data. Each
performs a precise musical transformation of~$\Pat$.

%%%%%%%%%%%%%%%%%%%%%%%%%%%%%%%%%%%%%%%%%%%%%%%%%%%%%%%%%%%%%%%%%%%%%%%%%%%%%%%%%%%%%%%%%%%%
%%%%%%%%%%%%%%%%%%%%%%%%%%%%%%%%%%%%%%%%%%%%%%%%%%%%%%%%%%%%%%%%%%%%%%%%%%%%%%%%%%%%%%%%%%%%
\subsection{Random temporizations}
Given a pattern $\Pat$ and an integer $t \geq 1$, we define the \Def{temporizator bud
generating system} $\BudSystemTemporizator_{\Pat, t}$ of $\Pat$ and $t$ by
\begin{equation}
    \BudSystemTemporizator_{\Pat, t}
    :=
    \Par{\OperadMP_1, \SetColors, \Bra{\CPat_1, \CPat_2, \CPat'_1, \dots, \CPat'_t},
    \Color_1}
\end{equation}
where $\SetColors$ is the set of colors $\Bra{\Color_1, \Color_2, \Color_3}$ and $\CPat_1$,
$\CPat_2$, $\CPat'_1$, \dots, $\CPat'_t$ are the $\SetColors$-colored $1$-multi-patterns
\begin{subequations}
\begin{equation}
    \CPat_1 := \Par{\Color_1, \Pat, \Color_2^{|\Pat|}},
    \quad
    \CPat_2 := \Par{\Color_2, \Pat, \Color_2^{|\Pat|}},
\end{equation}
\begin{equation}
    \CPat'_j :=
    \Par{\Color_2,
    \begin{MultiPattern}
        0 & \Rest^j
    \end{MultiPattern},
    \Color_3},
    \quad j \in [t].
\end{equation}
\end{subequations}
The temporizator bud generating system of $\Pat$ and $t$ generates a version of the pattern
$\Pat$ composed with itself where the durations of some beats have been increased by at most
$t$. The colors, and in particular the color $\Color_3$, prevent multiple compositions of
the colored patterns~$\CPat'_j$, $j \in [t]$, in order to not overly increase the duration
of some beats.

For instance, by considering the pattern
\begin{math}
    \Pat := 0 2 \Rest 1 \Rest 0 4
\end{math}
and the parameter $t := 2$, the partial random generation algorithm ran with the bud
generating system $\BudSystemTemporizator_{\Pat, t}$ and $k := 16$ as inputs produces the
pattern
\begin{equation}
    0 2 \Rest \Rest \Rest 1 \Rest 3 \Rest \Rest \Rest 2 \Rest 1 5 \Rest 0 \Rest \Rest 4
    \Rest.
\end{equation}
Together with the interpretation consisting in a tempo of $128$ and the rooted scale
$\Par{\Scale, \Note{9}{3}}$ where $\Scale$ is the Hirajoshi scale, the generated musical
phrase is
\begin{abc}[name=PhraseExample5,width=.85\abcwidth]
X:1
T:
K:Am
M:8/8
L:1/8
Q:1/8=128
A,1 C4 B,2 E4 C2 B,1 A2 A,3 F2
\end{abc}

%%%%%%%%%%%%%%%%%%%%%%%%%%%%%%%%%%%%%%%%%%%%%%%%%%%%%%%%%%%%%%%%%%%%%%%%%%%%%%%%%%%%%%%%%%%%
%%%%%%%%%%%%%%%%%%%%%%%%%%%%%%%%%%%%%%%%%%%%%%%%%%%%%%%%%%%%%%%%%%%%%%%%%%%%%%%%%%%%%%%%%%%%
\subsection{Random rhythmic variations}
Given a pattern $\Pat$ and a rhythm pattern $\RhyPat$, we define the \Def{rhythmic bud
generating system} $\BudSystemRhythmic_{\Pat, \RhyPat}$ of $\Pat$ and $\RhyPat$ by
\begin{equation}
    \BudSystemRhythmic_{\Pat, \RhyPat}
    :=
    \Par{\OperadMP_m, \SetColors, \Bra{\CPat_1, \CPat_2, \CPat_3}, \Color_1}
\end{equation}
where $\SetColors$ is the set of colors $\Bra{\Color_1, \Color_2, \Color_3}$ and $\CPat_1$,
$\CPat_2$, and $\CPat_3$ are the three $\SetColors$-colored $1$-multi-patterns
\begin{subequations}
\begin{footnotesize}
\begin{equation}
    \CPat_1 := \Par{\Color_1, \Pat, \Color_2^{|\Pat|}},
    \enspace
    \CPat_2 := \Par{\Color_2, \Pat, \Color_2^{|\Pat|}},
    \enspace
    \CPat_3 := \Par{\Color_2, \RhyPat', \Color_3^{|\RhyPat|}},
\end{equation}
\end{footnotesize}
\end{subequations}
where $\RhyPat'$ is the pattern $\Par{0^{|\RhyPat|}, \RhyPat}$. The rhythmic bud generating
system of $\Pat$ and $\RhyPat$ generates a version of the pattern $\Pat$ composed with
itself where some beats are repeated accordingly to the rhythm pattern $\RhyPat$.  The
colors, and in particular the color $\Color_3$, prevent multiple compositions of the colored
pattern~$\CPat_3$. Observe that when $\RhyPat = \epsilon$, each composition involving
$\CPat_3$ deletes a beat in the generated pattern.

For instance, by considering the pattern
\begin{math}
    \Pat := 1 \Rest 0 1 1 \Rest 2
\end{math}
and the rhythm pattern
\begin{math}
    \RhyPat := \Beat \Beat \Rest \Beat \Rest,
\end{math}
the partial random generation algorithm ran with the bud generating system
$\BudSystemRhythmic_{\Pat, \RhyPat}$ and $k := 8$ as inputs produces the pattern

\begin{footnotesize}
\begin{multline}
    2 2 \Rest \Rest 2 \Rest \Rest 1 2 2 \Rest 3 \Rest 1 \Rest 0 1 1 \Rest 2 2 \Rest 1 2 2
    \Rest \Rest 2 \Rest 2 \Rest
    \\
    3 1 \Rest 2 2 \Rest \Rest 2 \Rest.
\end{multline}
\end{footnotesize}
Together with the interpretation consisting in a tempo of $128$ and the rooted scale
$\Par{\Scale, \Note{9}{3}}$ where $\Scale$ is the minor natural scale, the generated musical
phrase is
\begin{abc}[name=PhraseExample6,width=1.0\abcwidth]
X:1
T:
K:Am
M:8/8
L:1/8
Q:1/8=128
V:voice1
C1 C3 C3 B,1 C1 C2 D2 B,2 A,1 B,1 B,2 C1 C2 B,1 C1 C3 C2 C2 D1 B,2 C1 C3 C2
\end{abc}

%%%%%%%%%%%%%%%%%%%%%%%%%%%%%%%%%%%%%%%%%%%%%%%%%%%%%%%%%%%%%%%%%%%%%%%%%%%%%%%%%%%%%%%%%%%%
%%%%%%%%%%%%%%%%%%%%%%%%%%%%%%%%%%%%%%%%%%%%%%%%%%%%%%%%%%%%%%%%%%%%%%%%%%%%%%%%%%%%%%%%%%%%
\subsection{Random harmonizations}
For any pattern $\Pat$ and an integer $m \geq 1$, we denote by $[\Pat]_m$ the
$m$-multi-pattern $\Par{[\Pat]_m^{(1)}, \dots, [\Pat]_m^{(m)}}$ satisfying $[\Pat]_m^{(i)} =
\Pat$ for all $i \in [m]$.

Given a pattern $\Pat$ and a degree pattern $\DegPat$ of arity $m \geq 1$, we define the
\Def{harmonizator bud generating system} $\BudSystemHarmonizator_{\Pat, \DegPat}$ of $\Pat$
and $\DegPat$ by
\begin{equation}
    \BudSystemArpeggiator_{\Pat, \DegPat}
    :=
    \Par{\OperadMP_m, \SetColors, \Bra{\CPat_1, \CPat_2, \CPat_3}, \Color_1}
\end{equation}
where $\SetColors$ is the set of colors $\Bra{\Color_1, \Color_2, \Color_3}$ and $\CPat_1$,
$\CPat_2$, and $\CPat_3$ are the three $\SetColors$-colored $m$-multi-patterns
\begin{subequations}
\begin{equation}
    \CPat_1 := \Par{\Color_1, [\Pat]_m, \Color_2^m},
    \quad
    \CPat_2 := \Par{\Color_2, [\Pat]_m, \Color_2^m},
\end{equation}
\begin{equation}
    \CPat_3 :=
    \Par{\Color_2,
    \begin{MultiPattern}
        \DegPat_1 \\
        \DegPat_2 \\
        \vdots \\
        \DegPat_m
    \end{MultiPattern},
    \Color_3}.
\end{equation}
\end{subequations}
The harmonizator bud generating system of $\Pat$ and $\DegPat$ generates an harmonized
version of the pattern $\Pat$ composed with itself, with chords controlled by~$\DegPat$. The
colors, and in particular the color $\Color_3$, prevent multiple compositions of the colored
pattern~$\CPat_3$.

For instance, by considering the pattern
\begin{math}
    \Pat := 2 1 0 2 \Rest 1 \Rest 0 \Rest
\end{math}
and the degree pattern
\begin{math}
    \DegPat := 0 5 \bar{7},
\end{math}
the partial random generation algorithm ran with the bud generating system
$\BudSystemHarmonizator_{\Pat, \DegPat}$ and $k := 3$ as inputs produces the
$3$-multi-pattern
\begin{equation}
    \begin{MultiPattern}
        2 & 1 & 0 & 2 & \Rest & 1 & \Rest & 0 & \Rest \\
        2 & 6 & 5 & 2 & \Rest & 1 & \Rest & 0 & \Rest \\
        2 & \bar{6} & \bar{7} & 2 & \Rest & 1 & \Rest & 0 & \Rest
    \end{MultiPattern}.
\end{equation}
Together with the interpretation consisting in a tempo of $128$ and the rooted scale
$\Par{\Scale, \Note{9}{3}}$ where $\Scale$ is the minor natural scale, the generated musical
phrase is
\begin{abc}[name=PhraseExample7,width=.45\abcwidth]
X:1
T:
K:Am
M:8/8
L:1/8
Q:1/8=128
V:voice1
C1 B,1 A,1 C2 B,2 A,2
V:voice2
C1 G1 F1 C2 B,2 A,2
V:voice3
C1 B,,1 A,,1 C2 B,2 A,2
\end{abc}

%%%%%%%%%%%%%%%%%%%%%%%%%%%%%%%%%%%%%%%%%%%%%%%%%%%%%%%%%%%%%%%%%%%%%%%%%%%%%%%%%%%%%%%%%%%%
%%%%%%%%%%%%%%%%%%%%%%%%%%%%%%%%%%%%%%%%%%%%%%%%%%%%%%%%%%%%%%%%%%%%%%%%%%%%%%%%%%%%%%%%%%%%
\subsection{Random arpeggiations}
Given a pattern $\Pat$ and a degree pattern $\DegPat$ of arity $m \geq 1$, we define the
\Def{arpeggiator bud generating system} $\BudSystemArpeggiator_{\Pat, \DegPat}$ of $\Pat$
and $\DegPat$ by
\begin{equation}
    \BudSystemArpeggiator_{\Pat, \DegPat}
    :=
    \Par{\OperadMP_m, \SetColors, \Bra{\CPat_1, \CPat_2, \CPat_3}, \Color_1}
\end{equation}
where $\SetColors$ is the set of colors $\Bra{\Color_1, \Color_2, \Color_3}$ and $\CPat_1$,
$\CPat_2$, and $\CPat_3$ are the three $\SetColors$-colored $m$-multipatterns
\begin{subequations}
\begin{equation}
    \CPat_1 := \Par{\Color_1, [\Pat]_m, \Color_2^m},
    \quad
    \CPat_2 := \Par{\Color_2, [\Pat]_m, \Color_2^m},
\end{equation}
\begin{equation}
    \CPat_3 :=
    \Par{\Color_2,
    \begin{MultiPattern}
        \DegPat_1 & \Rest & \Rest & \dots & \Rest \\
        \Rest & \DegPat_2 & \Rest & \dots & \Rest \\
        \vdots & \vdots & \vdots & \vdots & \vdots  \\
        \Rest & \Rest & \dots & \Rest & \DegPat_m
    \end{MultiPattern},
    \Color_3}.
\end{equation}
\end{subequations}
The arpeggiator bud generating system of $\Pat$ and $\DegPat$ generates an arpeggiated
version of the pattern $\Pat$ composed with itself, where the arpeggio is controlled
by~$\DegPat$. The colors, and in particular the color $\Color_3$, prevent multiple
compositions of the colored pattern~$\CPat_3$. Observe in particular that when $\DegPat =
0^m$, each composition involving $\CPat_3$ creates a repetition of a same beat over the $m$
stacked voices.

For instance, by considering the pattern
\begin{math}
    \Pat := 0 \Rest 2 1 3 \Rest 1
\end{math}
and the degree pattern
\begin{math}
    \DegPat := 0 2 4,
\end{math}
the partial random generation algorithm ran with the bud generating system
$\BudSystemArpeggiator_{\Pat, \DegPat}$ and $k := 8$ as inputs produces the
$3$-multi-pattern
\begin{scriptsize}
\begin{equation}
    \begin{MultiPattern}
        0 & \Rest & 2 & 1 & \Rest & \Rest & 3 & \Rest & 1 & \Rest & 2 & \Rest & 4 & 3 & 5
            & \Rest & 3 & \Rest & \Rest & 1 & \Rest & \Rest & 3 & \Rest & 1 \\
        0 & \Rest & 2 & \Rest & 3 & \Rest & 3 & \Rest & 1 & \Rest & 2 & \Rest & 4 & 3 & 5
            & \Rest & \Rest & 5 & \Rest & \Rest & 3 & \Rest & 3 & \Rest & 1 \\
        0 & \Rest & 2 & \Rest & \Rest & 5 & 3 & \Rest & 1 & \Rest & 2 & \Rest & 4 & 3 & 5
            & \Rest & \Rest & \Rest & 7 & \Rest & \Rest & 5 & 3 & \Rest & 1
    \end{MultiPattern}.
\end{equation}
\end{scriptsize}
Together with the interpretation consisting in a tempo of $128$ and the rooted scale
$\Par{\Scale, \Note{0}{3}}$ where $\Scale$ is the major natural scale, the generated musical
phrase is
\begin{abc}[name=PhraseExample8,width=1.0\abcwidth]
X:1
T:
K:Am
M:8/8
L:1/8
Q:1/8=128
V:voice1
C,2 E,1 D,3 F,2 D,2 E,2 G,1 F,1 A,2 F,3 D,3 F,2 D,1
V:voice2
C,2 E,2 F,2 F,2 D,2 E,2 G,1 F,1 A,3 A,3 F,2 F,2 D,1
V:voice3
C,2 E,3 A,1 F,2 D,2 E,2 G,1 F,1 A,4 C3 A,1 F,2 D,1
\end{abc}

%%%%%%%%%%%%%%%%%%%%%%%%%%%%%%%%%%%%%%%%%%%%%%%%%%%%%%%%%%%%%%%%%%%%%%%%%%%%%%%%%%%%%%%%%%%%
%%%%%%%%%%%%%%%%%%%%%%%%%%%%%%%%%%%%%%%%%%%%%%%%%%%%%%%%%%%%%%%%%%%%%%%%%%%%%%%%%%%%%%%%%%%%
%%%%%%%%%%%%%%%%%%%%%%%%%%%%%%%%%%%%%%%%%%%%%%%%%%%%%%%%%%%%%%%%%%%%%%%%%%%%%%%%%%%%%%%%%%%%
\bibliographystyle{plain}
\bibliography{Bibliography}

\end{document}